\theoremstyle{plain}
\newtheorem{theorem}{Theorem}
\theoremstyle{plain}
\newtheorem{proposition}{Proposition}
\theoremstyle{plain}
\newtheorem{corollary}{Corollary}
\theoremstyle{plain}
\newtheorem{lemma}{Lemma}
\theoremstyle{plain}
\newtheorem{assumption}{Assumption}
\newcommand\va{{\bf a}} 
\newcommand\vb{{\bf b}}
\newcommand\vf{{\bf f}}
\newcommand\vg{{\bf g}}
\newcommand\vh{{\bf h}}
\newcommand\vn{{\bf n}}
\newcommand\vp{{\bf p}}
\newcommand\vs{{\bf s}}
\newcommand\vu{{\bf u}}
\newcommand\vv{{\bf v}}
\newcommand\vy{{\bf y}}
\newcommand\mA{{\bf A}} 
\newcommand\mD{{\bf D}}
\newcommand\mF{{\bf F}}
\newcommand\mH{{\bf H}}
\newcommand\mI{{\bf I}}
\newcommand\mR{{\bf R}}
\newcommand\mS{{\bf S}}
\newcommand\mU{{\bf U}}
\newcommand\mV{{\bf V}}
\newcommand\mW{{\bf W}}
\newcommand\mX{{\bf X}}
\newcommand\mY{{\bf Y}}
\newcommand\defi{{\triangleq}}
\begin{document}

\def\QEDclosed{\mbox{\rule[0pt]{1.3ex}{1.3ex}}}
\def\QEDopen{{\setlength{\fboxsep}{0pt}\setlength{\fboxrule}{0.2pt}\fbox{\rule[0pt]{0pt}{1.3ex}\rule[0pt]{1.3ex}{0pt}}}}
\def\QED{\QEDopen}
\def\proof{}
\def\endproof{\hspace*{\fill}~\QED\par\endtrivlist\unskip}

\title{Addressing the curse of mobility in massive MIMO with Prony-based angular-delay domain channel predictions}

\author{Haifan~Yin,
    Haiquan~Wang,~\IEEEmembership{Senior Member,~IEEE},
    Yingzhuang~Liu,
    and David~Gesbert,~\IEEEmembership{Fellow,~IEEE,}
\thanks{
Copyright (c) 2020 IEEE. Personal use of this material is permitted. However, permission to use this material for any other purposes must be obtained from the IEEE by sending a request to pubs-permissions@ieee.org.

A part of this work \cite{yin2020icc} was submitted to the 54th IEEE International Conference on Communications (IEEE ICC 2020).
}
\thanks{H. Yin, Y. Liu are with Huazhong University of Science and Technology, 430074 Wuhan, China (e-mail: yin@hust.edu.cn, liuyz@hust.edu.cn).
}%
\thanks{H. Wang is with Hangzhou Dianzi University, 310018 Hangzhou, China (e-mail: tx\_wang@hdu.edu.cn).
}
\thanks{D. Gesbert is with EURECOM, 06410 Biot, France (e-mail: gesbert@eurecom.fr).
}
}

\maketitle

\begin{abstract}
Massive MIMO is widely touted as an enabling technology for 5th generation (5G) mobile communications and beyond. On paper, the large excess of base station (BS) antennas promises unprecedented spectral efficiency gains. Unfortunately, during the initial phase of industrial testing, a practical challenge arose which threatens to undermine the actual deployment of massive MIMO: user mobility-induced channel Doppler. In fact, testing teams reported that in moderate-mobility scenarios, e.g., 30 km/h of user equipment (UE) speed, the performance drops up to 50\% compared to the low-mobility scenario, a problem rooted in the acute sensitivity of massive MIMO to this channel Doppler, and not foreseen by many theoretical papers on the subject.
In order to deal with this ``curse of mobility", we propose a novel form of channel prediction method, named Prony-based angular-delay domain (PAD) prediction, which is built on exploiting the specific angle-delay-Doppler structure of the multipath. In particular, our method relies on the high angular-delay resolution which arises in the context of 5G. Our theoretical analysis shows that when the number of base station antennas and the bandwidth are large, the prediction error of our PAD algorithm converges to zero for any UE velocity level, provided that only two accurate enough previous channel samples are available. Moreover, when the channel samples are inaccurate, we propose to combine the PAD algorithm with a denoising method for channel estimation phase based on the subspace structure and the long-term statistics of the channel observations.
Simulation results show that under a realistic channel model of 3GPP in rich scattering environment, our proposed method is able to overcome this challenge and even approaches the performance of stationary scenarios where the channels do not vary at all.

\end{abstract}

\begin{IEEEkeywords}
mobility, massive MIMO, 5G, channel aging, channel prediction, angular-delay domain, Prony's method
\end{IEEEkeywords}



\section{Introduction}\label{sec_intro}

Massive multiple-input multiple-output (MIMO) introduced in \cite{marzetta:10a}, is one of the key enablers of the 5G cellular systems.
Compared to traditional MIMO with fewer base station antennas, massive MIMO can offer superior spectral efficiency and energy efficiency \cite{ngo:13} at least in theory.
One of the basic concepts is based on the fact that, as the number of BS antennas increases, the vector channel for a desired UE grows more orthogonal to the vector channel of an interfering UE, thus allowing the base station to reject interference by inexpensive precoding methods,
provided that Channel State Information (CSI) is known at base station.
CSI acquisition is known to be a formidable problem in massive MIMO. An example of CSI acquisition issue is the \emph{pilot contamination} problem.
A rich body of literature has addressed this problem. The solutions vary from angular/amplitude domain discrimination \cite{yin:13} \cite{mueller:14} \cite{yin:16}, pilot coordination \cite{yin:13}, multi-cell minimum mean square error (M-MMSE)\cite{Li:2017} \cite{Bjornson:2018}, etc.

Despite the technology hype and great expectations behind massive MIMO, some of the latest field trials have unfortunately been more than disappointing when it comes to actual system performance (see \cite{Huawei:17} \cite{FraunhoferIIS:RP-191951} for example). In particular it appeared that CSI acquisition can be severely affected in mobility scenarios. This is related to the time-varying nature of wireless channel which itself limits its coherence time, i.e., the time duration after which CSI is considered outdated. In practical cellular networks, a processing delay at the base station is inevitable because of the highly sophisticated 5G protocol, scheduling, resource allocation, encoding/decoding, and channel training under UE power constraint. This implies that even in moderate-mobility scenarios, the processing delay can end up being longer than the coherence time, making it essentially unusable for multiuser beamforming \cite{larsson:2017massive}.
More precisely, due to UE mobility, the channel can vary significantly within the time interval of CSI delay, which is defined as the duration from the time CSI is learned by the base station to the time it can be used in multiuser precoding. This channel variation, in turn, gives rise to multi-user interference if the precoder is computed based on outdated CSI.
It was for instance observed in industrial settings, that with a typical CSI delay of 4 milliseconds, the moderate-mobility scenario at 30 km/h leads to as much as 50\% of the performance reduction versus in low-mobility scenario at 3 km/h, even with relatively small number of BS antennas (e.g., 32 or 64). The performance degradation is even more severe when the number of BS antennas increases.
Solving this problem has become a priority in telecom industry. The topic of mobility enhancement has been actively discussed in 3GPP meetings recently from 5G standardization point of view. \cite{Cohere:R1-167595} proposes to address this problem by new modulation scheme named Orthogonal Time Frequency Space (OTFS), which may lead to higher diversity gains compared to Orthogonal Frequency Division Multiplexing (OFDM). \cite{FraunhoferIIS:RP-191951} and \cite{FraunhoferIIS:RP-193072} suggest that UE feeds back some information related to Doppler spectrum, which is measured based on downlink reference signals.
In academia, some information theoretic efforts of exploiting severely delayed CSI have been demonstrated but never tested in practical 5G contexts \cite{Yi:2014, Maddah-Ali:2012}.
The effects of channel aging  under a simple autoregressive (AR) model of channel time variations were studied in \cite{Truong:2013} and a linear finite impulse response (FIR) Wiener predictor was proposed. The complexity of this predictor is relatively high due to the inversion of a large matrix. The sum-rate performance with such a FIR Wiener predictor in the presence of delayed CSIT is also analyzed in \cite{Papazafeiropoulos:15} \cite{Kong:15}.  \cite{Kashyap:17} studied the performance of massive MIMO when Kalman predictor is used under a time-correlated channel aging model with rectangular spectrum. Some field trials of massive MIMO with mobility are carried out in \cite{Harris:17}. However the experiments are conducted in Line of Sight (LOS) scenarios, which simplifies the mobility problem. This is because the channel vector in LOS setting is close to a deterministic vector multiplied by a Doppler-dependent complex amplitude.

In this paper, we revisit the problem of CSI acquisition by combining it with practical and affordable
channel prediction algorithms. We propose a novel Prony-based angular-delay domain channel prediction algorithm by exploiting the structural information of the multipath channel.
More specifically, our predictor is based on the fact that the wireless channel is composed of many (e.g., several hundreds of) paths, each having a certain angle, delay, Doppler, and complex amplitude. The large number of base station antennas and the large bandwidth in 5G lead to higher resolution in both spatial and frequency domain. Our idea consists in exploiting this high resolution regime specifically. In practice the approach involves projecting the channel into an angular-delay domain, then capturing the channel variations in this domain. The intuition behind our method is to isolate one or several close-by paths from the rest, thus making the channels more predictable.
To do this, we propose to adopt here Prony's method, traditionally used in the context of spectral analysis, for its ability to predict a uniformly sampled signal composed of damped complex exponentials. In this paper we point out that this feature turns out to be useful in the 5G context because the training signal in 5G are normally periodic and the channel can be regarded as a sum of complex exponentials with each one corresponding to a path response having a Doppler component.

More specifically, the contributions of our paper are as follows:
 \begin{itemize}
   \item We first generalize the classical Prony's method to vector form and propose a vector Prony-based channel prediction algorithm, which exploits the angular-delay-Doppler structure of the wireless multipath channel to enable direct vector-domain channel prediction. Substantial gains over existing methods are observed in simulations.
   \item We propose a PAD channel prediction algorithm, which combines the high spatial and frequency resolutions of 5G massive MIMO and the angular-delay-Doppler structure of the channel. The PAD method requires less previous channel samples and achieves higher performance compared with the vector Prony prediction method. The gains over known schemes are significant.
   \item We analyze the asymptotic performance of our PAD algorithm and prove that as the number of base station antennas and the bandwidth increase, the channel prediction error converges to zero, provided that only two accurate enough channel samples are available.
   \item Finally, since in practice, current channel estimates are noisy, we improve the performance of the vector Prony method and PAD method by combining them with a denoising method using an adaptation of Tufts-Kumaresan's method \cite{Tufts:1982}.
 \end{itemize}

Simulations under the clustered delay line (CDL) channel models of 3GPP \cite{3gpp:38.901} show that our proposed method at 60 km/h of UE speed is very close to the ideal case of a stationary setting.
To the best of our knowledge, the study of channel prediction under such a realistic model of wideband massive MIMO has received little attention so far, and the high spatial-frequency resolution of 5G has not yet been fully exploited to solve the mobility challenge.

The paper is organized as follows: In Sec. \ref{modeling} we introduce the channel model of 3GPP \cite{3gpp:38.901}. In Sec. \ref{sec:dealingMobility} we first give a brief review of Prony's method, then propose the vector-based generalized Prony's method, and proceed with the proposed PAD method and its performance analysis. In Sec. \ref{sec:estimationError} we propose denoising method for the vector Prony-based algorithm and PAD algorithm. Finally, simulation results are shown in Sec. \ref{sec:numericalResult}.

Notations: We use boldface to denote matrices and vectors. Specifically, ${\mathbf{I}}$ denotes the identity matrix. ${({\mathbf{X}})^T}$, ${({\mathbf{X}})^*}$, and ${({\mathbf{X}})^H}$ denote the transpose, conjugate, and conjugate transpose of a matrix ${\mathbf{X}}$ respectively. ${({\mathbf{X}})^\dag}$ is the Moore-Penrose pseudoinverse of ${\mathbf{X}}$. $\operatorname{tr}\left\{ \cdot \right\}$ denotes the trace of a square matrix.
${\left\| \cdot \right\|_2}$ denotes the $\ell^2$ norm of a vector when the argument is a vector, and the spectral norm when the argument is a matrix.
${\left\| \cdot \right\|_F}$ stands for the Frobenius norm.
$\mathbb{E}\left\{ \cdot \right\}$ denotes the expectation.
${\bf{X}}\otimes{\bf{Y}}$ is the Kronecker product of ${\bf{X}}$ and ${\bf{Y}}$.
 $\operatorname{vec} (\mX)$ is the vectorization of the matrix $\mX$.
${\mathop{\rm diag}\nolimits} \{ {\bf{a_1,...,a_N}}\}$ denotes a diagonal matrix or a block diagonal matrix with $\bf{a_1,...,a_N}$ at the main diagonal. $\triangleq$ is used for definition. $\mathbb{N}$ and $\mathbb{N}^+$ are the set of non-negative and positive integers respectively.

\section{Channel Models}\label{modeling}
For ease of exposition, we consider an arbitrary UE in a certain cell. The antennas at the base station form a uniform planar array (UPA) with $N_v$ rows and $N_h$ columns as in commercial systems\footnote{We ignore here the polarizations in order to simplify the notations. Sec. \ref{sec:numericalResult} will incoperate the widely used dual polarized antenna model in 5G.}. Denote the number of antennas at the base station as $N_t$ and the number of antennas at the UE as $N_r$. It is clear that $N_t = N_v N_h$. The network operates in time-division duplexing (TDD) mode and the uplink (UL) and downlink (DL) occupy the same bandwidth, which consists of $N_f$ subcarriers with spacing $\triangle f$. The channel is composed of $P$ multipaths, with each path having a certain angle, delay, Doppler, and complex amplitude.

We denote the elevation departure angle, azimuth departure angle, elevation arrival angle, and azimuth arrival angle of the $p$-th path as $\theta_{p, \text{ZOD}}$, $\phi_{p, \text{AOD}}$, $\theta_{p, \text{ZOA}}$, and $\phi_{p, \text{AOA}}$ respectively. The ranges of the angles are
\begin{equation}
\theta_{p, \text{ZOD}}, \theta_{p, \text{ZOA}} \in [0, \pi],
\end{equation}
and
\begin{equation}
\phi_{p, \text{AOD}}, \phi_{p, \text{AOA}} \in (-\pi, \pi],
\end{equation}
for any $p = 1, \cdots, P$. In order to make the angular representation more rigorous, we set the azimuth angle to zero in case the elevation angle is $0$ or $\pi$, that is
\begin{equation}\label{Eq:theta0_pi}
\left\{ {\begin{array}{*{20}{c}}
{{\phi _{p,{\rm{AOD}}}} = 0,\text{ if } {\theta _{p,{\rm{ZOD}}}} = 0 \text{ or } \pi }\\
{{\phi _{p,{\rm{AOA}}}} = 0,\text{ if } {\theta _{p,{\rm{ZOA}}}} = 0 \text{ or } \pi }
\end{array}} \right.
\end{equation}

The DL channel at a certain time $t$ and a subcarrier with frequency $f$ is denoted as $\mH(f, t) \in {\mathbb{C}^{N_r \times N_t }}$. According to \cite{3gpp:38.901}, the channel between the $s$-th base station antenna and the $u$-th UE antenna is modeled as
\begin{align}\label{Eq:hft}
\mbox{\small$\displaystyle
{{h}}_{u,s}(f, t) = \sum\limits_{p = 1}^P {\beta _{p}} e^{\frac{j2\pi \hat{r}_{\text{rx},p}^T \bar{d}_{\text{rx}, u}}{\lambda_0}} e^{\frac{j2\pi \hat{r}_{\text{tx},p}^T \bar{d}_{\text{tx}, s}}{\lambda_0}} {e^{ - j2\pi f{\tau _p}}} e^{j\omega_p t},$}
\end{align}
where $\beta _{p}$ and $\tau_p$ are the complex amplitude and the delay of the $p$-th path respectively. $\lambda_0$ is the wavelength of center frequency. $\hat{r}_{\text{rx},p}$ is the spherical unit vector with azimuth arrival angle $\phi_{p, \text{AOA}}$ and elevation arrival angle $\theta_{p, \text{ZOA}}$:
\begin{equation}\label{Eq:rhat_rx_p}
\hat{r}_{\text{rx},p} \triangleq \left[ {\begin{array}{*{20}{c}}
{\sin {\theta_{p, \text{ZOA}}}\cos {\phi_{p, \text{AOA}}}}\\
{\sin {\theta_{p, \text{ZOA}}}\sin {\phi_{p, \text{AOA}}}}\\
{\cos {\theta_{p, \text{ZOA}}}}
\end{array}} \right].
\end{equation}
Likewise, $\hat{r}_{\text{tx},p}$ is the spherical unit vector defined as:
\begin{equation}\label{Eq:rhat_tx_p}
\hat{r}_{\text{tx},p} \triangleq \left[ {\begin{array}{*{20}{c}}
{\sin {\theta_{p, \text{ZOD}}}\cos {\phi_{p, \text{AOD}}}}\\
{\sin {\theta_{p, \text{ZOD}}}\sin {\phi_{p, \text{AOD}}}}\\
{\cos {\theta_{p, \text{ZOD}}}}
\end{array}} \right].
\end{equation}
$\bar{d}_{\text{rx}, u}$ is the $u$-th UE antenna's location vector which contains the 3D cartesian coordinate. Similarly, $\bar{d}_{\text{tx}, s}$ is the location vector of the $s$-th base station antenna. The last exponential term $e^{j\omega_p t}$ is the Doppler of the $p$-th path, where $t$ denotes time. $\omega_p$ is defined as $\omega_p \triangleq {\hat{r}_{\text{rx},p}^T \bar{v} }/{\lambda_0}$,
where $\bar{v}$ is the UE velocity vector with speed $v$, travel azimuth angle $\phi_v$, and travel elevation angle $\theta_v$:
\begin{equation}\label{Eq:v_bar}
\bar{v} \triangleq v {[\begin{array}{*{20}{c}}
{\sin {\theta _v}\cos {\phi _v}}&{\sin {\theta _v}\sin {\phi _v}}&{\cos {\theta _v}}
\end{array}]^T}.
\end{equation}
An illustration of the coordinate system
is shown in Fig. \ref{fig:3Dmodel}.
\begin{figure}[h]
  \centering
  \includegraphics[width=3.5in]{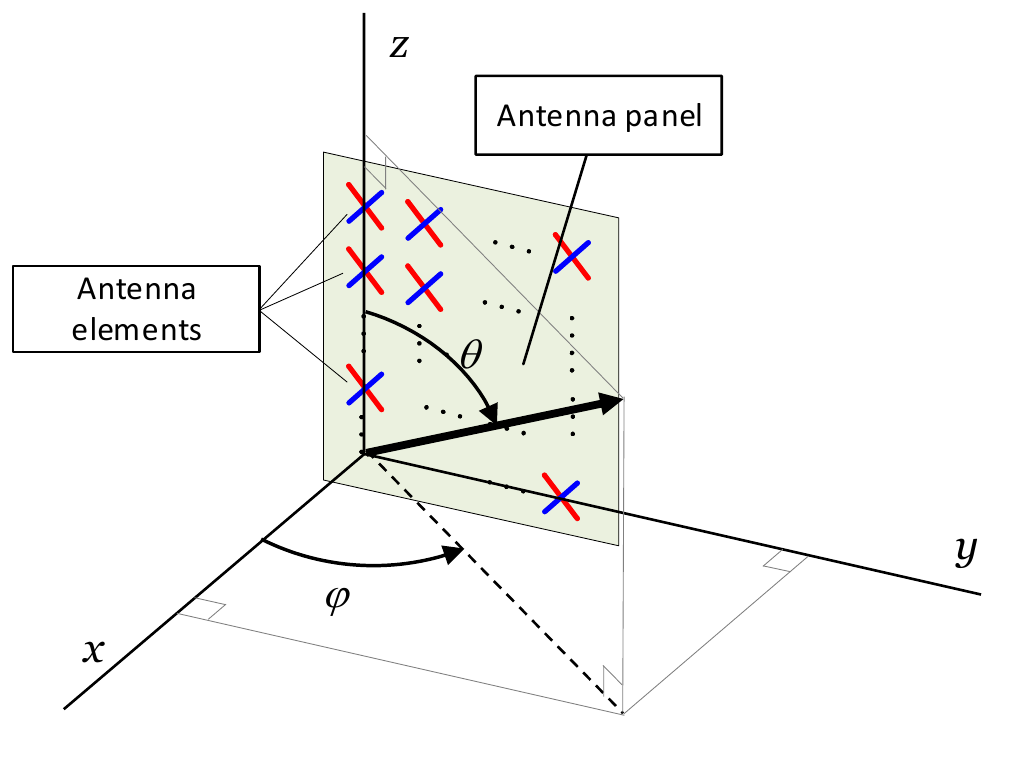}\\
  \caption{Definition of the coordinate system.} \label{fig:3Dmodel}
\end{figure}

Note that shifting or rotating the coordinate system has little impact on the channel model. Without loss of generality, we let the origin be at the first base station antenna which is located at the lower left corner of the antenna panel, as shown in Fig. \ref{fig:3Dmodel}. The antenna panel is on YZ plane. The antenna index starts from the lower left corner of the antenna panel and increases along the Z-axis until the top row, then continues with the second column, third column, etc.
Define the 3-D steering vector of a certain path with elevation departure angle ${\theta}$ and azimuth departure angle ${\phi}$ as
\begin{equation}
{\bf{a}}({\theta},{\phi}) = {{\bf{a}}_h}({\theta},{\phi}) \otimes {{\bf{a}}_v}({\theta}),
\end{equation}
where
\begin{small}
\begin{align}
& {{\bf{a}}_h}(\theta ,\phi ) = \\ \nonumber
& {\left[ {\begin{array}{*{20}{c}}
1&{{e^{j2\pi \frac{{{D_h}\sin (\theta )\sin (\phi )}}{{{\lambda _0}}}}}}& \cdots &{{e^{j2\pi \frac{{({N_h} - 1){D_h}\sin (\theta )\sin (\phi )}}{{{\lambda _0}}}}}}
\end{array}} \right]^T},
\end{align}
\end{small}
and
\begin{equation}
\mbox{\small$\displaystyle
{{\bf{a}}_v}(\theta ) = {\left[ {\begin{array}{*{20}{c}}
1&{{e^{j2\pi \frac{{{D_v}\cos (\theta )}}{{{\lambda _0}}}}}}& \cdots &{{e^{j2\pi \frac{{({N_v} - 1){D_v}\cos (\theta )}}{{{\lambda _0}}}}}}
\end{array}} \right]^T},$}
\end{equation}
with ${D_h}$ and ${D_v}$ being the horizontal and vertical antenna spacing at the base station respectively.

Let $\vh_u(f, t) \in {\mathbb{C}^{1 \times N_t }}$ denote the channel between all base station antenna and the $u$-th UE antenna at time $t$ and frequency $f$. We write the channels at all $N_f$ subcarriers in a matrix form:
\begin{equation}\label{Eq:Hut}
{{\bf{H}}_u}(t) \triangleq [\begin{array}{*{20}{c}}
{{\bf{h}}_u^T({f_1},t)}&{{\bf{h}}_u^T({f_2},t)}& \cdots &{{\bf{h}}_u^T({f_{{N_f}}},t)}
\end{array}],
\end{equation}
where $f_i$ is the frequency of the $i$-th $(1 \leq i \leq N_f)$ subcarrier.
According to the model in Eq. (\ref{Eq:hft}), we may further write
\begin{equation}\label{Eq:HutACB}
{{\bf{H}}_u}(t) = {\bf{A}}{{\bf{C}}_u}(t){{\bf{B}}},
\end{equation}
where ${\bf{A}} \in {\mathbb{C}^{N_t \times P}}$ is composed of $P$ 3-D steering vectors:
\begin{small}
\begin{equation}
{\bf{A}}{\rm{ }} \buildrel \Delta \over = \left[ {\begin{array}{*{20}{c}}
{{\bf{a}}({\theta _{1,{\rm{ZOD}}}},{\phi _{1,{\rm{AOD}}}})}& \cdots &{{\bf{a}}({\theta _{P,{\rm{ZOD}}}},{\phi _{P,{\rm{AOD}}}})}
\end{array}} \right],
\end{equation}
\end{small}
and
\begin{equation}
{\bf{B}} \buildrel \Delta \over = \left[ {\begin{array}{*{20}{c}}
{{\bf{b}}({\tau _1})}&{{\bf{b}}({\tau _2})}& \cdots &{{\bf{b}}({\tau _P})}
\end{array}} \right]^T,
\end{equation}
with ${{\bf{b}}({\tau _p})}$, $(p = 1, \cdots, P)$ being the delay response vector of the $p$-th path, which is defined as
\begin{equation}
\mbox{\small$\displaystyle
{\bf{b}}({\tau _p}) = {\left[ {\begin{array}{*{20}{c}}
{{e^{ -j2\pi {f_1}{\tau _p}}}}&{{e^{ -j2\pi {f_2}{\tau _p}}}}& \cdots &{{e^{ -j2\pi {f_{{N_f}}}{\tau _p}}}}
\end{array}} \right]^T},$}
\end{equation}
and ${{\bf{C}}_u}(t) =  {\mathop{\rm diag}\nolimits} \{ {{c_{u,1}}(t),...,{c_{u,P}}(t)}\} \in \mathbb{C}^{P \times P}$ is a diagonal matrix with its $p$-th ($p = 1, \cdots, P$) diagonal entry being
\begin{equation}
{c_{u,p}}(t) \buildrel \Delta \over = {\beta _p}{e^{\frac{{j2\pi \hat r_{{\rm{rx}},p}^T{{\bar d}_{{\rm{rx}},u}}}}{{{\lambda _0}}}}}{e^{j{\omega _p}t}}.
\end{equation}

The vectorized form of Eq. (\ref{Eq:HutACB}) is given by
\begin{equation}\label{Eq:hbaru}
\bm{\hbar}_u(t) = \operatorname{vec} ({{\bf{H}}_u}(t)) = \sum\limits_{p = 1}^P {c_{u,p}}(t) \vv_p,
\end{equation}
where
\begin{equation}\label{Eq:vp}
\vv_p = {\bf{b}}({\tau _p}) \otimes {{\bf{a}}({\theta _{p,{\rm{ZOD}}}},{\phi _{p,{\rm{AOD}}}})}.
\end{equation}
$\vv_p$ is a generalized steering vector, which reflects the angle and delay response of the $p$-th path in a wideband multiple antenna system.
From Eq. (\ref{Eq:HutACB}) or Eq. (\ref{Eq:hbaru}) we have the observation that
the channel is highly structural in both spatial and frequency domain. Each path is associated with a certain steering vector and delay response vector, depending on its angle and delay. This structural information is hidden in the generalized steering vectors.

\section{Dealing with mobility through predictions}\label{sec:dealingMobility}

\subsection{The challenge of mobility in massive MIMO}\label{sec:mobilityProblem}
As is well known, channel time variability can create inter-user interference induced by a precoder which is computed based on aging CSI.
This impediment can be mitigated by anticipating the future channel variations. While predicting the future fading state of a wireless channel is a very challenging task, the accounting of the specific space-time structure of the channel which arises in a broadband context (as in 5G) opens fresh perspectives for improvement.

\subsection{A review of Prony's method}\label{sec:pronyReview}
Prony's method proposed by Gaspard Riche de Prony in 1795 is a useful tool to analyze a uniformly sampled signal composed of a number of damped complex exponentials \cite{Prony:1795} and extract valuable information (e.g., the amplitudes and frequencies of the exponentials) which can be used for prediction. A review of this method is briefly given below.
Suppose we have $K$ samples of data $y(k)$ which consist of $N$ exponentially damped signals:
\begin{equation}\label{Eq:yk}
y(k) = \sum\limits_{n = 1}^N {{\beta _n}{e^{({-\alpha _n} + j{2\pi f_n})k}}} ,0 \le k \le K - 1,
\end{equation}
where $\alpha _n$ (positive) and $f_n$ ($1 \le n \le N$) are the pole damping factor and pole frequency respectively. $\beta _n$ ($1 \le n \le N$) is the complex amplitude. Note that in the special case of channel prediction, $y(k)$ can be regarded as the uniformly sampled channel estimate.
Define the following polynomial:
\begin{equation}\label{Eq:P0}
{P_0}(z) \defi \prod\limits_{n = 1}^N {\left( {z - {e^{{s_n}}}} \right)} = \sum\limits_{n = 0}^N p_n z^n, z \in {\mathbb{C}},
\end{equation}
where $s_n = {-\alpha _n} + j{2\pi f_n}$ for $n = 1, \cdots, N$. It is clear that $p_N = 1$ and $e^{s_n}, (n = 1, \cdots, N)$ are zeros of ${P_0}(z)$.
For an arbitrary $m \in \mathbb{N}$, one has
\begin{align}\label{Eq:pny}
\sum\limits_{n = 0}^N {{p_n}y(n + m)} & = \sum\limits_{n = 0}^N {{p_n}\sum\limits_{l = 1}^N {{\beta _l}{e^{{s_l}(n + m)}}} } \nonumber \\
& =  \sum\limits_{l = 1}^N {{\beta _l}{e^{{s_l}m}}\left( {\sum\limits_{n = 0}^N {{p_n}{e^{{s_l}n}}} } \right)}  \overset{{a}}{=} 0,
\end{align}
where $\overset{{a}}{=}$ is due to the fact that $e^{s_l} (l = 1, \cdots, N)$ are zeros of ${P_0}(z)$. Eq.(\ref{Eq:pny}) implies that the following homogeneous linear difference equation is fulfilled:
\begin{equation}
\sum\limits_{n = 0}^{N - 1} {{p_n}y(n + m)}  =  - y(N + m),m \in \mathbb{N}.
\end{equation}

Thus, we may obtain the coefficients $p_n$ with the $2N$ sampled data by solving
the following linear equations:
\begin{equation}\label{Eq:Yp}
\mY \vp = - \vh,
\end{equation}
where $\mY$ is a square Hankel matrix
\begin{small}
\begin{equation}\label{Eq:Y}
\mY \triangleq \left[ {\begin{array}{*{20}{c}}
{y(0)}&{y(1)}& \cdots &{y(N - 1)}\\
{y(1)}&{y(2)}& \cdots &{y(N)}\\
 \vdots & \vdots & \vdots & \vdots \\
{y(N-1)}&{y(N)}& \cdots &{y(2N - 2)}
\end{array}} \right],
\end{equation}
\end{small}
\begin{equation}\label{Eq:p}
{\bf{p}} \triangleq {[\begin{array}{*{20}{c}}
{{p_0}}&{{p_1}}& \cdots &{{p_{N - 1}}}
\end{array}]^T},
\end{equation}
\begin{equation}\label{Eq:h}
{\bf{h}} \triangleq {[\begin{array}{*{20}{c}}
{y(N)}&{y(N + 1)}& \cdots &{y(2N - 1)}
\end{array}]^T}.
\end{equation}
The least squares solution to Eq. (\ref{Eq:Yp}) is given by
\begin{equation}\label{Eq:phat}
\hat {\bf{p}} = \mathop {\arg \min }\limits_{\bf{p}} {\left\| {{\bf{Yp}} + {\bf{h}}} \right\|^2} = - {\bf{Y}^{\dag}} \vh.
\end{equation}
Note that we may need $K \geq 2N$ samples to obtain all the coefficients $p_n$, $n = 0, \cdots, N-1$.

\subsection{Channel prediction based on vector Prony method}
While Prony's method is presented in scalar form in the literature, we propose an extension to vector form in this paper.
Consider a uniformly sampled signal vector composed of weighted sum of constant vectors where the weights are damped complex exponentials. Suppose we have $K$ samples of signal vector $\vy(k), k = 0, \cdots, K-1$:
\begin{equation}\label{Eq:vyk}
{\bf{y}}(k) = \sum\limits_{n = 1}^N {{{\bf{a}}_n}{e^{s_n k}}}  = \mA{\left[ {\begin{array}{*{20}{c}}
{{e^{{s_1}k}}}& \cdots &{{e^{{s_N}k}}}
\end{array}} \right]^T},
\end{equation}
where ${\bf{A}} \defi \left[ {\begin{array}{*{20}{c}}
{{{\bf{a}}_1}}&{{{\bf{a}}_2}}& \cdots &{{{\bf{a}}_N}}
\end{array}} \right]$; $s_n \defi -\alpha _n + j{2\pi f_n}$ and $\va_n \in \mathbb{C} ^{M \times 1}$ is a time-invariant vector for $n = 1, \cdots, N$.
In the context of channel prediction, ${\bf{a}}_n$ can be a steering vector or a generalized steering vector.

We use the same polynomial $P_0(z)$ in Eq. (\ref{Eq:P0}) with $e^{s_n}, (n = 1, \cdots, N)$ being zeros and $p_0, p_1, \cdots, p_N$ being the coefficients.
For $\forall m \in \mathbb{N}$, we have
\begin{small}
\begin{align*}
& \sum\limits_{n = 0}^N {{p_n}{\bf{y}}(n + m) = } \sum\limits_{n = 0}^N {{p_n}{\bf{A}}{{\left[ {\begin{array}{*{20}{c}}
{{e^{{s_1}(n + m)}}}& \cdots &{{e^{{s_N}(n + m)}}}
\end{array}} \right]}^T}} \nonumber \\
= & {\bf{A}}\left[ {\begin{array}{*{20}{c}}
{\sum\limits_{n = 0}^N {{p_n}{e^{{s_1}(n + m)}}} }\\
{\sum\limits_{n = 0}^N {{p_n}{e^{{s_2}(n + m)}}} }\\
 \vdots \\
{\sum\limits_{n = 0}^N {{p_n}{e^{{s_N}(n + m)}}} }
\end{array}} \right] = {\bf{A}}\left[ {\begin{array}{*{20}{c}}
{{e^{{s_1}m}}\sum\limits_{n = 0}^N {{p_n}{e^{{s_1}n}}} }\\
{{e^{{s_2}m}}\sum\limits_{n = 0}^N {{p_n}{e^{{s_2}n}}} }\\
 \vdots \\
{{e^{{s_N}m}}\sum\limits_{n = 0}^N {{p_n}{e^{{s_N}n}}} }
\end{array}} \right] = {\bf{0}}.
\end{align*}
\end{small}
Thus, we can compute the coefficients of $p_0, p_1, \cdots, p_{N-1}$ by solving the following linear equations
\begin{equation}\label{Eq:mYp}
 \mY \vp =  - {\bf{y}}(N),
\end{equation}
where $\mY \defi \left[ {\begin{array}{*{20}{c}}
{{\bf{y}}(0)}&{{\bf{y}}(1)}& \cdots &{{\bf{y}}(N - 1)}
\end{array}} \right]$ and $\vp \triangleq {[\begin{array}{*{20}{c}}
{{p_0}}&{{p_1}}& \cdots &{{p_{N - 1}}}
\end{array}]^T}$.
The least squares estimate of the coefficient is $\hat{\vp} = - \mY^\dag \vy(N)$.

We now apply this method to channel predictions. Denote the vectorized channel of the whole bandwidth at time $t$ as $\underline{\bm{\hbar}}(t) \in \mathbb{C}^{N_t N_r N_f \times 1}$:
\begin{equation}\label{Eq:hwbvec}
\underline{\bm{\hbar}}(t) \defi {[\begin{array}{*{20}{c}}
\bm{\hbar}_1(t)^T &\bm{\hbar}_2(t)^T & \cdots &\bm{\hbar}_{N_r}(t)^T
\end{array}]^T},
\end{equation}
where $\bm{\hbar}_u(t), u = 1, \cdots, N_r$, is defined in Eq. (\ref{Eq:hbaru}).
Our target is to overcome the CSI delay by channel prediction based on previous samples which are equally spaced in time. In practice, the samples are obtained using periodic Sounding Reference Signal (SRS) transmitted by a UE. The period of SRS $\bigtriangleup T$ can be as short as one slot (14 OFDM symbols for normal cyclic prefix case) \cite{3gpp:38.331}. Taking the 15 kHz subcarrier spacing for example, the minimal period is 1 ms. In fact for other configurations of subcarrier spacing, e.g., 30 kHz or more, the period can be much shorter.
We assume the CSI delay $T_d = N_d \triangle T, N_d \in \mathbb{N}$. Denote the known samples as $\underline{\bm{\hbar}}(t_0), \underline{\bm{\hbar}}(t_1), \cdots, \underline{\bm{\hbar}}(t_L)$. Let $\bm{\mathcal{H}} \defi [\underline{\bm{\hbar}}(t_0), \underline{\bm{\hbar}}(t_1), \cdots, \underline{\bm{\hbar}}(t_{L-1})]$.
One may have the following prediction algorithm of order $N = L$ to predict the channel of $T_d$ time later.

\begin{algorithm}
\caption{Vector Prony-based channel prediction}
\begin{algorithmic}[1]\label{Alg:PronyVector}
\STATE{Compute the least squares estimate of the Prony coefficients $\hat \vp = -\bm{\mathcal{H}}^{\dag} \underline{\bm{\hbar}}(t_{L})$;}

\STATE{Update $\bm{\mathcal{H}} \leftarrow \left[ \underline{\bm{\hbar}}(t_{1}), \cdots ,\underline{\bm{\hbar}}(t_{L}) \right]$;}

\STATE{Compute the channel prediction at $t_{L+1}$, $\hat{\underline{\bm{\hbar}}}(t_{L+1}) = -\bm{\mathcal{H}}\hat \vp $;}

\STATE{\textbf{for} $i = 2, \cdots, N_d$}

\STATE{\quad Update $\bm{\mathcal{H}}$ by removing its first column and appending the previously predicted channel to its last column: $\bm{\mathcal{H}} \leftarrow \left[ {\underline{\bm{\hbar}}({t_i}), \cdots ,\underline{\hat{\bm{\hbar}}}({t_{L + i - 1}})} \right]$;}

\STATE{\quad Compute $\hat{\underline{\bm{\hbar}}}(t_{L+i}) = -\bm{\mathcal{H}} \hat{\vp}$;}

\STATE{\textbf{end for}}

\end{algorithmic}
\end{algorithm}
Note that in case $N_d = 1$, step 4 - step 6 are not needed. The minus sign in step 1, step 3, and step 6 can be all removed without affecting the results. In fact, we choose to predict each time the whole wideband channel so that only one matrix inversion (of size $N \times N$) is needed, which helps to reduce the computational complexity. Other possibilities include predicting each time for a certain subcarrier or for a certain UE antenna $u$, e.g., $\vh_u(f, t)$, however at the expense of more $N \times N$ matrix inversions. The complexity of this algorithm is dominated by the computation of the Prony coefficients $\hat \vp$ and the prediction of $\hat{\underline{\bm{\hbar}}}(t_{L+i})$. It can be verified that vector Prony-based algorithm has a complexity order of $\mathcal{O}(N^2 N_t N_r N_f) + \mathcal{O}(N_d N N_t N_r N_f)$.

\subsection{Prony-based angular-delay domain channel prediction}
As shown in Eq. (\ref{Eq:hft}), the channel is composed of $P$ paths, and each path has a Doppler term $e^{j \omega_p t}, p = 1, \cdots, P$. The number of paths can be large, which makes the prediction accuracy degrade if only a limited number of samples are available. In order to cope with this problem, we propose a Prony-based angular-delay domain (PAD) channel prediction method. The main idea is to convert the channel into another domain where the Doppler terms of different paths are less intertwined with each other. We choose this domain in such a way that it reflects the geometry of the antenna array and the wideband delay response structure of the channel. As indicated by \cite{yin:13} and \cite{adhikary:13}, the steering vectors of a uniform linear array (ULA) can be well approximated by the columns of discrete Fourier transform (DFT) matrix as the number of antennas increases. Here we apply this result in the 3-D steering vector case for the UPA array. Denote a DFT matrix of size $K \times K$ as
\begin{align*}
\mbox{\small$\displaystyle
\mW(K) \triangleq \frac{1}{\sqrt{K}}\begin{bmatrix}
\omega^{0\cdot0} & \omega^{0\cdot1} & \cdots & \omega^{0(K-1)} \\
\omega^{1\cdot0} & \omega^{1\cdot1} & \cdots & \omega^{1(K-1)} \\
\vdots & \vdots & \ddots & \vdots \\
\omega^{(K-1)\cdot0} & \omega^{(K-1)\cdot 1}& \cdots & \omega^{(K-1)(K-1)} \end{bmatrix},$}
\end{align*}
where $\omega \triangleq e^{-2 \pi j / K}$. Since UPA antenna array is considered, a DFT-based spatial orthogonal basis can be obtained as $\mW(N_h) \otimes \mW(N_v)$,
where $N_h$ and $N_v$ are the number of columns and the number of rows of antennas on the UPA respectively. Thus each column of $\mW(N_h) \otimes \mW(N_v)$ can be regarded as a spatial beam that reflects the array topology shown in Fig. \ref{fig:3Dmodel}.
Likewise, the frequency orthogonal basis is $\mW(N_f)$. Notice that in practice, $N_f$ can also denote the number of resource blocks (RBs) or the number of groups of consecutive RBs  depending on the SRS frequency structure. In such cases, $N_f$ is much smaller than the total number of subcarriers.
The joint spatial-frequency orthogonal basis can be computed as\footnote{Another option of the orthogonal basis is $\mS \defi \mW(N_f) \otimes \mW(N_h)^H \otimes \mW(N_v)^H$, which seems more consistent with the representations of the steering vector and delay response vector. However this substitution has no impact on the performance.}
\begin{equation}\label{Eq:Smat}
\mS \defi \mW(N_f) \otimes \mW(N_h) \otimes \mW(N_v).
\end{equation}
We project the vectorized channel Eq. (\ref{Eq:hbaru}) onto the spatial-frequency orthogonal basis $\mS$.
\begin{equation}\label{Eq:Gt}
\vg_u(t) \defi \mS^H \bm{\hbar}_u(t).
\end{equation}
$\vg_u(t) \in \mathbb{C}^{{N_t N_f} \times 1}$ is in fact the vectorized representation of the channel in angular-delay domain.
Due to finite number of multipath components, most of the elements in vector $\vg_u(t)$ are close to zero when the number of BS antennas and the bandwidth are large. As a result we may ignore the insignificant elements in $\vg_u(t)$ and focus on the predictions of the significant ones.
Let ${\tilde{\bf{g}}_u}({t_l})$ be the re-arranged ${{\bf{g}}_u}({t_l})$ with its absolute values in non-increasing order.
The number of non-negligible angular-frequency positions $N_s$ is defined as
\begin{equation*}
\mbox{\small$\displaystyle
N_s \defi \mathop {\arg \min }\limits_{{N_s}} \left\{ {\sum\limits_{u = 1}^{{N_r}} \sum\limits_{l = 0}^L {\sum\limits_{n = 1}^{{N_s}} {{{\left| {{\tilde{\bf{g}}_u}({t_l},n)} \right|}^2}} }  \ge \gamma \sum\limits_{u = 1}^{N_r} \sum\limits_{l = 0}^L {| {{\vg_u}({t_l})} |^2} } \right\},$}
\end{equation*}
where ${{\tilde{\bf{g}}_u}({t_l},n)}$ is the $n$-th entry of ${\tilde{\bf{g}}_u}({t_l})$, $\gamma$ is a positive threshold that is close to 1. The physical meaning of $\gamma$ is the ratio between the sum power of non-negligible elements and the total power of the channel. Note that $N_s$ is normally much smaller than the size of the vector $\vg_u(t)$: $N_s \ll N_t N_f$. Thus by ignoring the insignificant elements, we may greatly reduce the computational complexity in channel prediction.
We use $g_{u,n}(t), (n = 1, \cdots, N_s)$ to denote the $n$-th non-negligible entry, which is located at the $r(n)$-th row of the vector $\vg_u(t)$.
The vectorized channel can be approximated as
\begin{equation}\label{Eq:Hubar}
{\bm{\hbar}}_u(t) \approx \sum\limits_{n = 1}^{{N_s}} {{g_{u,n}(t)}{\bf{s}}_{r(n)}},
\end{equation}
where ${{\bf{s}}_{i}}$ is the $i$-th column of $\mS$.
We seek to predict the channel at each of the $N_s$ angle-delay pairs using Prony's method with $L+1$ samples ${{g}_{u,n}(t_0)}, \cdots, {{g}_{u,n}(t_L)}$. Without loss of generality, we assume $L$ is odd and let the order of the predictor $N = (L+1)/2$.
For a certain $n, 1 \le n \le N_s$,  we may obtain the Prony coefficients by solving
\begin{equation}\label{Eq:Gp}
{\bm{\mathcal G}}(u,n) \vp(u,n) = - \vg(u,n),
\end{equation}
where ${\bm{\mathcal G}}(u,n) \triangleq $
\begin{equation}\label{Eq:mathcalG}
\left[ {\begin{array}{*{20}{c}}
{{g_{u,n}}({t_0})}&{{g_{u,n}}({t_1})}& \cdots &{{g_{u,n}}({t_{{{\rm{N-1}}}}})}\\
{{g_{u,n}}({t_1})}&{{g_{u,n}}({t_2})}& \cdots &{{g_{u,n}}({t_{\rm{N}}})}\\
 \vdots & \vdots & \vdots & \vdots \\
{{g_{u,n}}({t_{\rm{N-1}}})}&{g_{u,n}}{(t_N)}& \cdots &{{g_{u,n}}({t_{\rm{2N-2}}})}
\end{array}} \right]
\end{equation}
\begin{equation}
{{\bf{p}}(u,n)} \triangleq {[\begin{array}{*{20}{c}}
{{p_0(u,n)}}& \cdots &{{p_{N - 1}(u,n)}}
\end{array}]^T},
\end{equation}
\begin{equation*}
{{\bf{g}}(u,n)} \triangleq {[\begin{array}{*{20}{c}}
{{g}_{u,n}(t_{\rm{N}})}&{{g}_{u,n}(t_{\rm{N+1}})}& \cdots &{{{g}_{u,n}(t_{\rm{2N-1}})}}
\end{array}]^T}.
\end{equation*}
The least squares estimate of ${{\bf{p}}(u,n)}$ is
\begin{equation}\label{Eq:phat_un}
{\hat{\bf{p}}(u,n)} = -{\bm{\mathcal G}}^{\dag}(u,n) {{\bf{g}}(u,n)}.
\end{equation}
The prediction of ${g}_{(u,n)}(t_{\rm{L+1}})$ is given by
\begin{equation}\label{Eq:gPred}
\hat{g}_{u,n}(t_{\rm{L+1}}) = - \vg(u,n,L) {\hat{\bf{p}}(u,n)},
\end{equation}
where
\begin{align}
\vg(u,n,L) \defi \left[ {\begin{array}{*{20}{c}}{{g_{u,n}}({t_{\rm{L-N+1}}})}& \cdots &{{g_{u,n}}({t_{{{\rm{L}}}}})}\end{array}} \right].
\end{align}

When $N_d > 1$, we may repeat computing Eq. (\ref{Eq:gPred}) $N_d - 1$ times and update $\vg(u,n,L)$ each time by removing the first column and appending the previous predict to the last column, until we obtain the prediction $\hat{g}_{u,n}(t_{\rm{L+N_d}})$.

The method is summarized in Algorithm \ref{Alg:PronyAngleDelay}.
\begin{algorithm}
\caption{PAD channel prediction method}
\begin{algorithmic}[1]\label{Alg:PronyAngleDelay}
\STATE{Compute the angular-delay domain channel $\vg_u(t_l)$ for $l = 0, \cdots, L$ and $u = 1, \cdots, N_r$ according to Eq. (\ref{Eq:Gt}).}

\STATE{Find the non-negligible values $g_{u,n}(t_l)$ and their positions $r(n)$ for $u = 1, \cdots, N_r, n = 1, \cdots, N_s, l = 0, \cdots, L$.}

\STATE{\textbf{for} $u = 1, \cdots, N_r$}

\STATE{\quad \textbf{for} $n = 1, \cdots, N_s$}

\STATE{\quad Compute the least squares estimate of the Prony coefficients as in Eq. (\ref{Eq:phat_un});}

\STATE{\quad Repeat Eq. (\ref{Eq:gPred}) $N_d$ times to compute the prediction $\hat{g}_{u,n}(t_{\rm{L+N_d}})$;}

\STATE{\quad \textbf{end for}}

\STATE{\quad Reconstruct the channel prediction at $t_{\rm{L+N_d}}$ as in Eq. (\ref{Eq:Hubar}) with ${g_{u,n}(t)}$ replaced by $\hat{g}_{u,n}(t_{\rm{L+N_d}})$;}

\STATE{\textbf{end for}}

\end{algorithmic}
\end{algorithm}
Note that we show in this section a DFT based angular-frequency domain projection as it is simple to implement in practice. In fact we may also adopt other angle and delay estimation methods, e.g., Multiple Signal Classification (MUSIC) \cite{R.schmidt:1986music}, Estimation of Signal Parameters via Rational Invariance Techniques (ESPRIT) \cite{roy:1989esprit}, etc.  However these advanced methods generally entails relatively high complexity due to a multi-dimensional search. The complexity of our PAD algorithm is now analyzed. The spatial-frequency orthogonal projection can be effectively computed by Fast Fourier Transform (FFT) algorithms, which has a complexity of $2N N_t N_f \log (N_t N_f)$.
The Prony coefficient computation has a complexity order of $\mathcal{O}(N_s N^{2.37} )$ due to the matrix inversion. The prediction of the channel at $t_{\rm{L+i}}, i = 1, \cdots, N_d$ has a complexity of $N_d N_s N$. The complexity of channel reconstruction using Eq. (\ref{Eq:Hubar}) is no greater than the orthogonal projection. As a result, the complexity order is $\mathcal{O}(N N_t N_f \log (N_t N_f)) + \mathcal{O}(N_s N^{2.37} ) + \mathcal{O}(N_d N_s N)$. Moreover, the order of the predictor $N$ will converge to 1 as the number of antennas and the bandwith increase, which will be shown in the subsequent section of performance analysis. Thus the complexity order of our PAD algorithm will converge to $\mathcal{O}(N_t N_f \log (N_t N_f))$ in massive MIMO and large bandwidth regime.

\subsection{Performance analysis of the PAD algorithm}\label{sec:performAnalysisAlgo2}
The asymptotical performance of our PAD algorithm is now analyzed. Define a tuple $(\theta _{p,{\rm{ZOD}}}, {\phi _{p,{\rm{AOD}}}}, \tau_p)$ which contains the elevation/azimuth departure angle and delay of the $p$-th path. Regarding the tuple, we let the equal sign $=$ denote the case when two tuples are completely equal. In other words, $(\theta _{p,{\rm{ZOD}}}, {\phi _{p,{\rm{AOD}}}}, \tau_p) = (\theta _{q,{\rm{ZOD}}}, {\phi _{q,{\rm{AOD}}}}, \tau_q)$ if and only if $\theta _{p,{\rm{ZOD}}} = \theta _{q,{\rm{ZOD}}}, {\phi _{p,{\rm{AOD}}}} = {\phi _{q,{\rm{AOD}}}}, \text{and } \tau_p = \tau_q$. $(\theta _{p,{\rm{ZOD}}}, {\phi _{p,{\rm{AOD}}}}, \tau_p) \neq (\theta _{q,{\rm{ZOD}}}, {\phi _{q,{\rm{AOD}}}}, \tau_q)$ means one or more entries in one tuple are not equal to the corresponding entries in the other.
Here we define a stationary time concept - the time duration over which the multipath angles and delays are stationary. We build our analysis under the assumption that the stationary time is no shorter than the CSI delay. This is in general a reasonable assumption, which is also validated in real-world measurement \cite{FraunhoferIIS:RP-193072}. Consider a vehicle moving at 100 km/h of speed. Within a CSI delay of 4 millisecond, the vehicle moves only around 10 centimeters. Such a small displacement will not introduce substantial changes in angles and delays of the multipath in a macro-cell environment.

Before stating our main result in Theorem  \ref{theoNoiseFreeEstimationErr}, we introduce three intermediate lemmas.
For notational simplicity, we mostly drop the subscripts of ``ZOD" and ``AOD" in the lemmas and Theorem \ref{theoNoiseFreeEstimationErr} and their corresponding proofs. Note that throughout the paper we make the implicit and realistic assumption that the delay and UE velocity level are finite. We also introduce here a mild technical assumption.
\begin{assumption}\label{assumption1}
For any two paths, $p \neq q$, at least one of the following three attributes are different from one another: the elevation departure angle, the azimuth departure angle, and the delay:
\begin{align}\label{Eq:noiseFreeCond}
& (\theta _{p}, {\phi _{p}}, \tau_p) \neq (\theta _{q}, {\phi _q}, \tau_q), \\
& \forall p,q = 1, \cdots, P \text{ and } p \neq q.
\end{align}
\end{assumption}
Remarks:
This assumption is in general valid. We will show more results in Corollary \ref{coro:NcPathCommonTuple} and Corollary \ref{coro:NcPathCommonTupleNB} when this assumption is not true.
More clarifications of Assumption \ref{assumption1} is shown in Lemma \ref{lemma:mutualOthogonal}.

\begin{lemma}\label{lemma:mutualOthogonal}
The generalized steering vectors $\vv_p$ and $\vv_q$ are asymptotically orthogonal:
\begin{equation}\label{Eq:vpvq}
\lim_{N_v, N_h, N_f \rightarrow \infty} \frac{\vv_p^H \vv_q}{\sqrt{N_v N_h N_f}} = 0,
\end{equation}
under Assumption \ref{assumption1} except for the special case of
\begin{align}
&\exists p, q \quad s.t. \quad {\phi _p + \phi _q} = \pm \pi, \theta _p = \theta _q, \tau_p = \tau_q. \label{Eq:phi_pi}
\end{align}
\end{lemma}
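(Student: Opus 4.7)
The plan is to exploit the Kronecker-product structure of $\vv_p$. By the definition $\vv_p = \vb(\tau_p) \otimes \va_h(\theta_p,\phi_p) \otimes \va_v(\theta_p)$ together with the mixed-product identity $(\vx_1 \otimes \vx_2)^H (\vy_1 \otimes \vy_2) = (\vx_1^H \vy_1)(\vx_2^H \vy_2)$, the inner product cleanly factorizes as
\begin{equation*}
\vv_p^H \vv_q = \bigl[\vb(\tau_p)^H \vb(\tau_q)\bigr]\cdot\bigl[\va_h(\theta_p,\phi_p)^H \va_h(\theta_q,\phi_q)\bigr]\cdot\bigl[\va_v(\theta_p)^H \va_v(\theta_q)\bigr].
\end{equation*}
Each of the three factors is a finite geometric series, with respective common ratios $e^{j 2\pi \Delta f (\tau_p - \tau_q)}$, $e^{j 2\pi D_h (\sin\theta_q \sin\phi_q - \sin\theta_p \sin\phi_p)/\lambda_0}$, and $e^{j 2\pi D_v (\cos\theta_q - \cos\theta_p)/\lambda_0}$. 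Using the standard closed form, each factor obeys a dichotomy: it equals its full dimension ($N_f$, $N_h$, or $N_v$) precisely when its ratio is one, and otherwise has magnitude bounded by a constant independent of that dimension.

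The crux is then a case analysis showing that under Assumption \ref{assumption1}, and outside the exceptional configuration \eqref{Eq:phi_pi}, at least one of the three ratios must differ from one. All three simultaneously equaling one would force $\tau_p = \tau_q$, $\cos\theta_p = \cos\theta_q$ (equivalently $\theta_p = \theta_q$ since both angles lie in $[0,\pi]$), and $\sin\theta_p \sin\phi_p = \sin\theta_q \sin\phi_q$. Substituting $\theta_p = \theta_q$ into the last identity with $\sin\theta_p \neq 0$ reduces it to $\sin\phi_p = \sin\phi_q$, whose solutions in $(-\pi,\pi]$ are either $\phi_p = \phi_q$, which combined with $\theta_p = \theta_q$ and $\tau_p = \tau_q$ contradicts Assumption \ref{assumption1}, or $\phi_p + \phi_q = \pm \pi$, which is exactly the excluded case \eqref{Eq:phi_pi}. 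In the remaining subcase $\sin\theta_p = 0$, convention \eqref{Eq:theta0_pi} forces $\phi_p = \phi_q = 0$, so Assumption \ref{assumption1} now imposes $\tau_p \neq \tau_q$, contradicting the first equality. Hence outside \eqref{Eq:phi_pi} at least one of the three factors is bounded.

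Combining the bounds, $|\vv_p^H \vv_q|$ is at most a constant multiple of the product of the remaining (at most dimension-sized) factors, so dividing by $\sqrt{N_v N_h N_f}$ and taking the joint limit $N_v, N_h, N_f \to \infty$ yields the claim. The delicate step is the case analysis above: the horizontal sum's ratio involves the coupled quantity $\sin\theta \sin\phi$ rather than $\phi$ alone, and it is precisely this coupling, once combined with $\theta_p = \theta_q$, that isolates $\phi_p + \phi_q = \pm \pi$ as the sole exception. The boundary subcase $\theta \in \{0,\pi\}$ also requires separate treatment because of the convention \eqref{Eq:theta0_pi}.
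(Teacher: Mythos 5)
Your proof follows the same route as the paper's: factor $\vv_p^H\vv_q$ through the Kronecker mixed-product rule into the three geometric sums $\vb(\tau_p)^H\vb(\tau_q)$, $\va_h(\theta_p,\phi_p)^H\va_h(\theta_q,\phi_q)$, $\va_v(\theta_p)^H\va_v(\theta_q)$, show each is bounded unless its common ratio equals one, and run a case analysis proving that under Assumption~\ref{assumption1}, outside the configuration \eqref{Eq:phi_pi}, at least one ratio differs from one. Your case analysis is in fact slightly cleaner than the paper's, which at the corresponding point states the residual condition as $\sin(\theta_p)=\sin(\theta_q)$ where it means $\sin(\phi_p)=\sin(\phi_q)$; you correctly reduce to $\phi_p=\phi_q$ versus $\phi_p+\phi_q=\pm\pi$ and treat the $\theta\in\{0,\pi\}$ boundary via the convention \eqref{Eq:theta0_pi}. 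One caveat, which your final step shares with the paper's own proof: with the normalization $\sqrt{N_vN_hN_f}$ as literally written in \eqref{Eq:vpvq}, a single bounded factor does not suffice, because the other two factors can each equal their full dimensions and the quotient then behaves like $\sqrt{N_hN_v/N_f}$, which need not vanish (e.g.\ $N_v=N_h=N_f$); the "remaining factors are at most dimension-sized" bound closes the argument only under the normalization $N_vN_hN_f$, which is the one consistent with $\|\vv_p\|_2^2=N_vN_hN_f$ and with how the lemma is actually invoked in the proof of Lemma~\ref{lemma:orthLinearSpace}.
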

\begin{proof}
\quad \emph{Proof:} The proof can be found in Appendix \ref{proof:mutualOthogonal}.
\end{proof}
Remarks: Lemma \ref{lemma:mutualOthogonal} indicates that any two generalized steering vectors with non-identical angle or delay tend to be orthogonal to each other, with the only exception being Eq. (\ref{Eq:phi_pi}). In fact, the exception occurs because the steering vectors are identical in case of Eq. (\ref{Eq:phi_pi}):
\begin{align}
& {{\bf{a}}({\theta _{p,{\rm{ZOD}}}},{\phi _{p,{\rm{AOD}}}})} = {{\bf{a}}({\theta _{q,{\rm{ZOD}}}},{\phi _{q,{\rm{AOD}}}})}, \\
& \text {when } {\phi _{p,{\rm{ZOD}}} + \phi _{q,{\rm{ZOD}}}} = \pm \pi, {\theta _{p,{\rm{ZOD}}}} = {\theta _{q,{\rm{ZOD}}}}.
\end{align}
Note that such a special case is highly unlikely to happen, as a path departing from the back side of the antenna panel, e.g., with angle $\phi _p$ is very weak and has little probability to have exactly the same delay as a path departing from the front side with angle ${\phi}_q = \pi - \phi _p$ or ${\phi}_q = -\pi - \phi _p$. As a result, this special case is not to be concerned.
Lemma \ref{lemma:mutualOthogonal} will be applied in the proof of the subsequent Lemma \ref{lemma:orthLinearSpace}.

For ease of exposition, we ignore the spacial cases of Eq. (\ref{Eq:phi_pi}) by letting the range of $\phi$ contained within $[- \pi/2, \pi/2]$.
For a certain path $p$, we define a linear spaces $\mathcal{B}_p$:
\begin{align}
\mathcal{B}_p = \text{span}\{\vs_n: n \in \mathcal{M}_p  \},
\end{align}
where $\vs_n$ is the $n$-th column $(n = 1, \cdots, N_t N_f)$ of the matrix $\mS$ as in Eq. (\ref{Eq:Smat}). The set $\mathcal{M}_p$ is
\begin{equation}\label{Eq:M_p}
\mathcal{M}_p = \{m_1, m_2, \cdots, m_{S_p}\},
\end{equation}
such that
\begin{align}
\lim_{N_v, N_h, N_f \rightarrow \infty} |\frac{\vs_n^H \vv_p}{\sqrt{N_v N_h N_f}}| &> 0, \forall n \in \mathcal{M}_p, \\
\lim_{N_v, N_h, N_f \rightarrow \infty} |\frac{\vs_n^H \vv_p}{\sqrt{N_v N_h N_f}}| &= 0, \forall n \not\in \mathcal{M}_p,
\end{align}
with $\vv_p$ being the generalized steering vector as defined in Eq. (\ref{Eq:vp}). The linear space $\mathcal{B}_p$ can be regarded as the minimal space where the vector $\vv_p$ lives in. The dimensionality of $\mathcal{B}_p$ is $S_p$. In the same way we define a linear space $\mathcal{B}_q = \text{span}\{\vs_n: n \in \mathcal{M}_q  \}$ for a certain path $q$. Then we have

\begin{lemma}\label{lemma:orthLinearSpace}
For any $(\theta _{p}, {\phi _{p}}, \tau_p) \neq (\theta _{q}, {\phi _q}, \tau_q)$,  the two linear spaces $\mathcal{B}_p$ and $\mathcal{B}_q$ are asymptotically orthogonal when $N_v, N_h, N_f$ are large:
\begin{equation}
\mathcal{B}_p \bot \mathcal{B}_q \text{ as } {N_v, N_h, N_f \rightarrow \infty},
\end{equation}
or equivalently,
\begin{equation}
\mathcal{M}_p \cap \mathcal{M}_q = \emptyset \text{ as } {N_v, N_h, N_f \rightarrow \infty}.
\end{equation}
\end{lemma}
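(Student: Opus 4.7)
The plan is to exploit the Kronecker-product structure of both the orthonormal basis $\mS$ and the generalized steering vector $\vv_p$, showing that $\mathcal{M}_p$ concentrates asymptotically on a bounded neighborhood of a single DFT bin whose normalized coordinates are completely determined by the tuple $(\theta_p,\phi_p,\tau_p)$. Distinctness of tuples will then force those neighborhoods for $p$ and $q$ to be disjoint once $N_v,N_h,N_f$ are large enough, which is precisely $\mathcal{M}_p\cap\mathcal{M}_q=\emptyset$, and equivalently $\mathcal{B}_p\perp\mathcal{B}_q$.

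\emph{First,} I would index each basis column by a triple $(k_f,k_h,k_v)$ so that $\vs_n=\vw_{k_f}(N_f)\otimes\vw_{k_h}(N_h)\otimes\vw_{k_v}(N_v)$, where $\vw_k(K)$ denotes the $k$-th column of $\mW(K)$; correspondingly I would write $\vv_p=\vb(\tau_p)\otimes\va_h(\theta_p,\phi_p)\otimes\va_v(\theta_p)$ using Eq.~(\ref{Eq:vp}) and the definition of $\va(\theta,\phi)$. The mixed-product property of the Kronecker product then factorizes the inner product as
\begin{equation*}
\vs_n^H\vv_p \;=\; D_v(k_v;\theta_p)\,\cdot\,D_h(k_h;\theta_p,\phi_p)\,\cdot\,D_f(k_f;\tau_p),
\end{equation*}
where each factor is a geometric sum of the form $\frac{1}{\sqrt{N}}\sum_{m=0}^{N-1}e^{j2\pi m\alpha}$ whose magnitude equals $\sqrt{N}$ at the resonance $\alpha\equiv 0\pmod 1$ and stays $O(1/\sqrt{N})$ whenever $\alpha$ is bounded away from the integers (Dirichlet-kernel estimate).

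\emph{Next,} dividing by $\sqrt{N_vN_hN_f}$, the preceding estimate shows that $|\vs_n^H\vv_p|/\sqrt{N_vN_hN_f}$ admits a strictly positive limit only when all three factors are simultaneously at resonance, i.e.\ when $k_v/N_v$, $k_h/N_h$, and $k_f/N_f$ converge modulo $1$ to $D_v\cos\theta_p/\lambda_0$, $D_h\sin\theta_p\sin\phi_p/\lambda_0$, and $\Delta f\,\tau_p$, respectively. Consequently $\mathcal{M}_p$ is asymptotically contained in a neighborhood of bounded width (measured in DFT bins) around a single ``resonance triple'' $\bigl(r_v(\theta_p),r_h(\theta_p,\phi_p),r_f(\tau_p)\bigr)\in[0,1)^3$, and the analogous statement holds for $\mathcal{M}_q$.

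\emph{Finally,} I would verify that distinct tuples produce resonance triples that differ by a positive constant independent of $N$. Since $\theta_p,\theta_q\in[0,\pi]$ and $\cos$ is injective there, $\theta_p\neq\theta_q$ forces $r_v(\theta_p)\neq r_v(\theta_q)$. If $\theta_p=\theta_q$ but $\phi_p\neq\phi_q$, the convention $\phi\in[-\pi/2,\pi/2]$ adopted just before the lemma excludes the degeneracy $\phi_p+\phi_q=\pm\pi$ (the sole exception in Lemma \ref{lemma:mutualOthogonal}), so $\sin\phi$ separates them and hence $r_h$ differs. If only $\tau_p\neq\tau_q$, the implicit unambiguous-delay assumption $|\tau_p-\tau_q|<1/\Delta f$ built into OFDM dimensioning gives $r_f(\tau_p)\neq r_f(\tau_q)$. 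In every case at least one coordinate is separated by a constant, so the bounded-radius neighborhoods $\mathcal{M}_p$ and $\mathcal{M}_q$ become disjoint once $N_v,N_h,N_f$ are large. The \emph{main obstacle} I expect is the careful ``modulo $1$'' bookkeeping of the resonance coordinates together with ruling out the borderline tuples already absorbed into the conventions of Section \ref{modeling} (elevation at the poles $0$ or $\pi$, and the $\phi_p+\phi_q=\pm\pi$ coincidence of Lemma \ref{lemma:mutualOthogonal}); by contrast the Dirichlet-kernel decay itself is standard.
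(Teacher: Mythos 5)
Your proof is correct, but it follows a genuinely different and more direct route than the paper's. The paper never computes $\vs_n^H\vv_p$ explicitly: instead it surrounds each tuple $(\theta_i,\phi_i,\tau_i)$ with a small box $\Omega_i$, builds an auxiliary covariance matrix $\mathcal{R}_i=\int_{\Omega_i}\vv\vv^H p_i$, uses Lemma \ref{lemma:mutualOthogonal} to show the signal spaces of $\mathcal{R}_p$ and $\mathcal{R}_q$ are asymptotically orthogonal, then invokes the known Toeplitz/Szeg\H{o}-type result (Corollary 1 of \cite{adhikary:13}) that the eigenspace of each one-dimensional factor $\mathcal{R}_{v,i},\mathcal{R}_{h,i},\mathcal{R}_{f,i}$ converges to a DFT column subspace, extends this to the threefold Kronecker product by a Frobenius-norm trace computation ($\xi_i\to 0$), and finally sandwiches $\mathcal{B}_i\subseteq\bar{\mathcal{B}}_i$ with $\bar{\mathcal{B}}_p\perp\bar{\mathcal{B}}_q$. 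Your argument instead factorizes $\vs_n^H\vv_p$ via the mixed-product property into three Dirichlet kernels and localizes $\mathcal{M}_p$ around a resonance bin triple determined by $\bigl(D_v\cos\theta_p/\lambda_0,\,D_h\sin\theta_p\sin\phi_p/\lambda_0,\,\Delta f\,\tau_p\bigr)$; this is self-contained (no appeal to external eigenspace-approximation results), yields an explicit description of $\mathcal{M}_p$ rather than only an inclusion into a limiting DFT subspace, and handles the separation cases ($\cos$ injective on $[0,\pi]$, $\sin$ injective on $[-\pi/2,\pi/2]$, delays unambiguous within $1/\Delta f$) in exactly the same way the paper's Lemma \ref{lemma:mutualOthogonal} implicitly does. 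The only points to tighten in a full write-up are (i) the resonance condition should be stated quantitatively --- a factor $\frac{1}{N}\bigl|\sum_{m=0}^{N-1}e^{j2\pi m\alpha}\bigr|$ has a positive limit precisely when $N\cdot\mathrm{dist}(\alpha,\mathbb{Z})$ stays bounded, i.e.\ the bin stays within $O(1)$ bins of resonance, which is what makes ``bounded width'' precise --- and (ii) the observation that since each normalized factor is at most $1$ in magnitude, the product has a positive limit only if all three factors are individually bounded away from zero. Neither is a gap; both are routine.
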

\begin{proof}
\quad \emph{Proof:} The proof can be found in Appendix \ref{proof:orthLinearSpace}.
\end{proof}
Remarks: In fact, when $N_v, N_h, N_f$ go to infinity, the generalized steering vector of a certain path lies in a column space of a submatrix of $\mS$. This submatrix is composed of $S_i$ columns of $\mS$, for $i = p, q$. Lemma \ref{lemma:orthLinearSpace} shows that when path $p$ and path $q$ are distinguishable in terms of either angle or delay, then they live in two orthogonal column spaces. In other words, the two paths will not interplay with each other after the orthogonal transformation by $\mS$. This effect will enable us to isolate the paths in mutually orthogonal column spaces of $\mS$, and thus make the signal processing and prediction easier.

\begin{lemma}\label{lemma:pronyOrder1}
Consider a uniformly sampled complex exponential signal $y(n) = \beta e^{j2\pi f n}, n \in \mathbb{N}^+ $, with fixed amplitude $\beta$ and frequency $f$. For any positive integer $N_d$, if two neighboring samples $y(m - 1)$ and $y(m)$ are known, then the Prony-based prediction at $N_d$ sample later, i.e., $\hat{y}(m + N_d)$, is error-free:
\begin{equation}
\hat{y}(m + N_d) = {y}(m + N_d).
\end{equation}
\end{lemma}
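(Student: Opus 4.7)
The plan is to apply Prony's method at order $N=1$ to the two known samples and observe that the single coefficient $p_0$ captures the exact per-sample phase rotation of the exponential; the claim then follows by a straightforward induction on $N_d$.

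First I would write down what Prony's procedure reduces to when $N=1$ and only two consecutive samples are available. The equation $\sum_{n=0}^{N-1} p_n y(n+m') = -y(N+m')$ specializes, with $m' = m-1$, to $p_0 y(m-1) = -y(m)$. Since $y(m-1) = \beta e^{j2\pi f(m-1)} \neq 0$ (assuming $\beta \neq 0$; the $\beta = 0$ case is trivial), the least-squares solution is exact and gives
\begin{equation*}
\hat{p}_0 = -\frac{y(m)}{y(m-1)} = -e^{j2\pi f}.
\end{equation*}
Note this is precisely the (negative of the) zero $e^{s_1}$ of the defining polynomial $P_0(z) = z - e^{j2\pi f}$ in \eqref{Eq:P0}, so Prony's method has recovered the true generating pole from just two samples.

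Next I would handle the one-step prediction. By the same recursion, $\hat{y}(m+1) = -\hat{p}_0 \cdot y(m) = e^{j2\pi f} \cdot \beta e^{j2\pi f m} = \beta e^{j2\pi f(m+1)} = y(m+1)$. Then I would proceed by induction on $N_d$: assume $\hat{y}(m+k) = y(m+k)$ for every $k \le N_d - 1$. By construction, the Prony predictor at step $N_d$ uses the most recent (possibly predicted) sample $\hat{y}(m+N_d-1)$, so
\begin{equation*}
\hat{y}(m+N_d) = -\hat{p}_0 \cdot \hat{y}(m+N_d-1) = e^{j2\pi f} \cdot y(m+N_d-1) = y(m+N_d),
\end{equation*}
which closes the induction.

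There is no real obstacle here: the lemma is essentially the statement that a geometric sequence is perfectly extrapolated by its ratio, and Prony's method at order one computes exactly that ratio. The only care needed is to state the degenerate $\beta = 0$ case (where the claim is vacuous) and to make clear that the iteration in Algorithm~\ref{Alg:PronyVector}/\ref{Alg:PronyAngleDelay} feeds previously predicted values back into the same one-tap recursion, so the error does not compound. This lemma will then serve as the building block for Theorem~\ref{theoNoiseFreeEstimationErr}, by combining it with Lemma~\ref{lemma:orthLinearSpace} to isolate each path's Doppler as a single complex exponential in its own column subspace of $\mS$.
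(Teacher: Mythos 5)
Your proposal is correct and follows essentially the same route as the paper's proof in Appendix~C: solve $p_0\,y(m-1) = -y(m)$ to obtain $\hat{p}_0 = -e^{j2\pi f}$, then iterate the one-tap recursion, the only cosmetic difference being that you phrase the $N_d$-step extrapolation as an induction while the paper writes it directly as $\hat{y}(m+N_d) = (-\hat{p}_0)^{N_d} y(m)$. Your extra remarks on the degenerate $\beta=0$ case and on the feedback of predicted values are harmless additions that do not change the argument.
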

\begin{proof}
\quad \emph{Proof:} The proof can be found in Appendix \ref{proof:pronyOrder1}.
\end{proof}
Remarks: Lemma \ref{lemma:pronyOrder1} indicates that even with only two noiseless samples, Prony's method is able to predict any complex exponential signal with only one pole frequency at an arbitrary number of sample period later without prediction error.

Based on the lemmas, our theoretical result on the asymptotic performance of the PAD algorithm is shown in Theorem \ref{theoNoiseFreeEstimationErr}. We denote here the vectorized channel sample at time $t$ by $\tilde{\bm{\hbar}}_u(t)$, which is the noisy observation of  ${\bm{\hbar}}_u(t)$ in Eq. (\ref{Eq:hbaru}).
\begin{theorem}\label{theoNoiseFreeEstimationErr}
For an arbitrary delay $N_d \in \mathbb{N}^+$ and any UE velocity level, the asymptotic performance of the PAD algorithm yields:
\begin{equation}\label{Eq:noiseFreeEstimationErr}
\lim_{N_v, N_h, N_f \rightarrow \infty} \frac{\left\|{\hat{\bm{\hbar}}_u(t_{L+N_d}) - \bm{\hbar}}_u(t_{L+N_d})\right\|_2^2}{\left\|{\bm{\hbar}}_u(t_{L+N_d})\right\|_2^2} = 0,
\end{equation}
under Assumption \ref{assumption1} and the condition that two most recent channel samples are accurate enough, i.e.,
\begin{equation}\label{Eq:accurateSamples}
\lim_{N_v, N_h, N_f \rightarrow \infty} \frac{\left\|{\tilde{\bm{\hbar}}_u(t_{k}) - \bm{\hbar}}_u(t_{k})\right\|_2^2}{\left\|{\bm{\hbar}}_u(t_{k})\right\|_2^2} = 0, \forall k = L-1, L.
\end{equation}
\end{theorem}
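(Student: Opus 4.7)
The plan is to leverage Lemma \ref{lemma:orthLinearSpace} to decouple the paths in the angular-delay domain, and then apply Lemma \ref{lemma:pronyOrder1} path by path, using order $N=1$ Prony prediction on just the two most recent samples. First, I would expand ${\bm{\hbar}}_u(t) = \sum_{p=1}^{P} c_{u,p}(t)\vv_p$ as in Eq. (\ref{Eq:hbaru}), noting that the only time-varying factor inside $c_{u,p}(t)$ is the Doppler exponential $e^{j\omega_p t}$. Then I would apply $\mS^H$ to obtain $\vg_u(t) = \sum_{p=1}^{P} c_{u,p}(t)\,\mS^H \vv_p$, so every entry of $\vg_u(t)$ is a finite sum of complex exponentials in $t$.

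The key structural step is to invoke Assumption \ref{assumption1} together with Lemma \ref{lemma:orthLinearSpace}: as $N_v, N_h, N_f \to \infty$ the index sets $\mathcal{M}_p$ defined in Eq. (\ref{Eq:M_p}) are pairwise disjoint, so each significant coordinate $r(n)$ of $\vg_u(t)$ receives contribution from exactly one path, say $p(n)$. Hence asymptotically
\begin{equation}
g_{u,n}(t) \;=\; c_{u,p(n)}(t) \cdot \frac{\vs_{r(n)}^H \vv_{p(n)}}{\|\vv_{p(n)}\|}\cdot \|\vv_{p(n)}\| \;=\; \alpha_{u,n}\, e^{j\omega_{p(n)} t},
\end{equation}
a single damped (undamped, in fact) complex exponential with constant coefficient $\alpha_{u,n}$. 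This is the crucial reduction that makes an order-$N=1$ predictor sufficient.

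Next I would apply Lemma \ref{lemma:pronyOrder1} separately to each non-negligible coordinate: taking the two most recent noiseless samples $g_{u,n}(t_{L-1})$ and $g_{u,n}(t_L)$, the $N_d$-step Prony extrapolation returns $g_{u,n}(t_{L+N_d})$ exactly. Summing the reconstruction in Eq. (\ref{Eq:Hubar}) and using the orthonormality of the columns $\vs_{r(n)}$ of $\mS$, the squared reconstruction error equals $\sum_{n=1}^{N_s} |\hat g_{u,n}(t_{L+N_d}) - g_{u,n}(t_{L+N_d})|^2$ plus a residual tail supported outside $\bigcup_n \mathcal{M}_{p(n)}$, and by the definition of $N_s$ the tail is controlled by $1-\gamma$ (which we let tend to $1$ in the limit). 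Under noiseless samples this gives the claimed vanishing ratio.

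The hard part is propagating the noise condition (\ref{Eq:accurateSamples}) through the pseudo-inverse in Eq. (\ref{Eq:phat_un}). I would argue as follows: the Hankel matrix $\bm{\mathcal G}(u,n)$ reduces, for an $N=1$ predictor, to the scalar $g_{u,n}(t_{L-1})$, so $\hat p_0(u,n) = -\tilde g_{u,n}(t_L)/\tilde g_{u,n}(t_{L-1})$, and condition (\ref{Eq:accurateSamples}), together with Parseval applied to $\mS^H(\tilde{\bm\hbar}_u-\bm\hbar_u)$, forces the per-coordinate relative perturbation of the significant entries to vanish. A continuity argument on Eq. (\ref{Eq:gPred}) then transfers this into a vanishing relative prediction error, and a final triangle-inequality bound
\begin{equation}
\|\hat{\bm\hbar}_u(t_{L+N_d})-\bm\hbar_u(t_{L+N_d})\|_2 \;\le\; \sum_{n=1}^{N_s}|\hat g_{u,n}-g_{u,n}| + \varepsilon_{\text{tail}},
\end{equation}
divided by $\|\bm\hbar_u(t_{L+N_d})\|_2$, yields Eq. (\ref{Eq:noiseFreeEstimationErr}). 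The denominator stays bounded away from zero because $\|\bm\hbar_u(t)\|_2^2$ grows like $N_v N_h N_f$ times the sum of path powers, whereas Lemma \ref{lemma:mutualOthogonal} ensures no destructive interference asymptotically.
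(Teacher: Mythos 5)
Your proposal follows essentially the same route as the paper's proof: use Lemma \ref{lemma:orthLinearSpace} under Assumption \ref{assumption1} to show each non-negligible angular-delay coordinate asymptotically carries exactly one path and hence reduces to a single undamped complex exponential, apply Lemma \ref{lemma:pronyOrder1} coordinate-wise with the two most recent samples, and normalize by $\|\bm{\hbar}_u\|_2^2 \sim N_v N_h N_f \sum_p \beta_p^2$. The paper writes the single-exponential reduction as the limit of ${\bf{s}}_n^H \tilde{\bm{\hbar}}_u(t)/\sqrt{N_v N_h N_f} \to \eta_{p,u,n}e^{j\omega_p t}$ and disposes of the noise and the residual tail in the same way you do, so the two arguments are substantively identical.
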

\begin{proof}
\quad \emph{Proof:} The proof can be found in Appendix \ref{proof:theoNoiseFreeEstimationErr}.
\end{proof}
Remarks: Note that in order to achieve condition Eq. (\ref{Eq:accurateSamples}), we may need some non-linear signal processing techniques. See \cite{yin:16} as an example of how this condition can be fulfilled for a multi-cell massive MIMO scenario in the presence of pilot contamination.
The mild technical assumption, i.e., Assumption \ref{assumption1}, in Theorem \ref{theoNoiseFreeEstimationErr} is in general valid, since a finite number of multipath rays exist at a certain time and one ray is unlikely to have exactly the same angle and delay as another ray. Although in rich scattering environment as defined in CDL-A model of \cite{3gpp:38.901}, the number of paths $P$ can be as large as several hundreds, each path still has a unique tuple of $(\theta _{{\rm{ZOD}}}, {\phi _{{\rm{AOD}}}}, \tau)$.
In the special case when one tuple $(\theta _{{\rm{ZOD}}}, {\phi _{{\rm{AOD}}}}, \tau)$ is shared by more than one paths, i.e., $\exists p \neq q$, such that
\begin{equation}\label{Eq:sharedTuple}
 (\theta _{p, {\rm{ZOD}}}, {\phi _{p, {\rm{AOD}}}}, \tau_p) = (\theta _{q, {\rm{ZOD}}}, {\phi _{q, {\rm{AOD}}}}, \tau_q),
\end{equation}
the asymptotic performance of Theorem \ref{theoNoiseFreeEstimationErr} can be generalized.
We again ignore the special cases of Eq. (\ref{Eq:phi_pi}) for notational simplicity. The asymptotic performance under condition Eq. (\ref{Eq:sharedTuple}) is shown in Corollary \ref{coro:NcPathCommonTuple}.
\begin{corollary}\label{coro:NcPathCommonTuple}
Among all $P$ paths, if at most $N_c$ paths share exactly the same tuple $(\theta_{\rm{ZOD}}, {\phi}_{\rm{AOD}}, \tau)$, then for an arbitrary delay $N_d \in \mathbb{N}^+$ and any UE velocity level, the performance of the PAD algorithm satisfies:
\begin{equation}
\lim_{N_v, N_h, N_f \rightarrow \infty} \frac{\left\|{\hat{\bm{\hbar}}_u(t_{L+N_d}) - \bm{\hbar}}_u(t_{L+N_d})\right\|_2^2}{\left\|{\bm{\hbar}}_u(t_{L+N_d})\right\|_2^2} = 0,
\end{equation}
given that at least $2 N_c$ accurate enough samples are available.
\end{corollary}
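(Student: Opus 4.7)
The plan is to adapt the proof of Theorem~\ref{theoNoiseFreeEstimationErr} by keeping the orthogonality decomposition supplied by Lemma~\ref{lemma:orthLinearSpace} intact while upgrading Lemma~\ref{lemma:pronyOrder1} from Prony order $1$ to Prony order $N_c$. First, I would partition the $P$ paths into equivalence classes according to the tuple $(\theta_{p,\text{ZOD}}, \phi_{p,\text{AOD}}, \tau_p)$. Denote the distinct tuples by $k = 1, \ldots, K$ and the corresponding path sets by $\mathcal{P}_k$, with $|\mathcal{P}_k| \le N_c$. All paths in a class share the same generalized steering vector $\vv^{(k)}$, so their joint contribution to $\bm{\hbar}_u(t)$ is $\vv^{(k)} \sum_{p \in \mathcal{P}_k} c_{u,p}(t)$, where $c_{u,p}(t) = \gamma_p e^{j\omega_p t}$ is a complex exponential in $t$. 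Because $\omega_p = \hat r_{\text{rx},p}^T \bar v/\lambda_0$ depends on the arrival direction rather than on the departure tuple, the Doppler frequencies $\{\omega_p : p \in \mathcal{P}_k\}$ are generically distinct.

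Next I would invoke Lemma~\ref{lemma:orthLinearSpace}: as $N_v, N_h, N_f \to \infty$, the subspaces $\mathcal{B}_p, \mathcal{B}_q$ attached to paths with distinct tuples become mutually orthogonal, so the index sets $\mathcal{M}_p$ for paths from different classes are disjoint in the limit. Consequently the DFT projection $\vg_u(t) = \mS^H \bm{\hbar}_u(t)$ cleanly separates contributions of different classes, and on each significant bin $n \in \mathcal{M}^{(k)}$ the scalar signal $g_{u,n}(t)$ is asymptotically a sum of at most $N_c$ complex exponentials in $t$ with pairwise distinct pole frequencies drawn from $\{\omega_p : p \in \mathcal{P}_k\}$.

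The key new ingredient is a Prony statement of order $N_c$: if $y(n) = \sum_{\ell=1}^{N_c} \beta_\ell e^{j 2\pi f_\ell n}$ with distinct $f_\ell$, then any $2N_c$ consecutive noiseless samples determine the annihilating polynomial coefficients $p_0, \ldots, p_{N_c-1}$ uniquely, and the Prony recursion reproduces $y(m + N_d)$ exactly for every $N_d \in \mathbb{N}^+$. I would establish this by factoring the Hankel matrix in Eq.~\eqref{Eq:mathcalG} as a product of a Vandermonde matrix in the $e^{j 2\pi f_\ell}$, a diagonal matrix of the non-zero $\beta_\ell$, and another Vandermonde; distinctness of the $f_\ell$ makes the Vandermonde factors non-singular, so Eq.~\eqref{Eq:phat_un} returns the exact coefficients of the polynomial whose roots are $\{e^{j 2\pi f_\ell}\}$, and Eq.~\eqref{Eq:gPred} then produces the exact extrapolation. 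The main obstacle will be controlling the passage to the limit here, namely showing that the asymptotic orthogonality of Lemma~\ref{lemma:orthLinearSpace} is strong enough to ensure that the Hankel matrix $\bm{\mathcal{G}}(u,n)$ constructed from the actual samples $\tilde{\bm{\hbar}}_u(t_k)$ inherits the Vandermonde non-singularity in the limit, just as in the analogous step of Theorem~\ref{theoNoiseFreeEstimationErr}.

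Finally, combining these pieces, the PAD algorithm run with order $N = N_c$ using the $L+1 = 2N_c$ accurate samples satisfying the analogue of Eq.~\eqref{Eq:accurateSamples} recovers $g_{u,n}(t_{L+N_d})$ exactly in every significant bin. The orthogonal reconstruction~\eqref{Eq:Hubar}, together with the asymptotic disjointness of the sets $\mathcal{M}^{(k)}$, then yields $\hat{\bm{\hbar}}_u(t_{L+N_d}) = \bm{\hbar}_u(t_{L+N_d})$ in the limit, giving the claimed vanishing relative prediction error for any $N_d \in \mathbb{N}^+$ and any UE velocity.
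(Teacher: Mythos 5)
Your proposal is correct and follows essentially the same route as the paper, which itself only sketches this corollary as "a generalization of Lemma \ref{lemma:pronyOrder1} to a sum of $N_c$ complex exponentials with time-invariant frequencies, predictable exactly from $2N_c$ accurate samples." Your Vandermonde factorization of the Hankel matrix and the explicit use of Lemma \ref{lemma:orthLinearSpace} to isolate each equivalence class of paths simply supply the details the paper omits.
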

\begin{proof}
\quad \emph{Proof:} The proof entails a generalization of Lemma \ref{lemma:pronyOrder1} to the case that the uniformly sampled signal is a sum of $N_c$ complex exponentials with time-invariant frequencies. It is straightforward to prove that in this case the prediction at $N_d \in \mathbb{N}^+$ sample later is also error-free as long as $2N_c$ accurate enough samples are available.
\end{proof}

In addition, for a narrowband system, e.g., $N_f = 1$, or $N_f$ is small, the frequency resolution is not sufficiently high. In such cases, the asymptotic result of our PAD algorithm is shown in Corollary \ref{coro:NcPathCommonTupleNB}.
\begin{corollary}\label{coro:NcPathCommonTupleNB}
Among all $P$ paths, if at most $N_c$ paths share exactly the same tuple $(\theta_{\rm{ZOD}}, {\phi}_{\rm{AOD}})$, then for an arbitrary delay $N_d \in \mathbb{N}^+$ and any UE velocity level, the performance of the PAD algorithm satisfies:
\begin{equation}
\lim_{N_v, N_h \rightarrow \infty} \frac{\left\|{\hat{\bm{\hbar}}_u(t_{L+N_d}) - \bm{\hbar}}_u(t_{L+N_d})\right\|_2^2}{\left\|{\bm{\hbar}}_u(t_{L+N_d})\right\|_2^2} = 0,
\end{equation}
given that at least $2N_c$ accurate enough samples are available.
\end{corollary}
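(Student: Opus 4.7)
The plan is to adapt the proof of Theorem~\ref{theoNoiseFreeEstimationErr} and Corollary~\ref{coro:NcPathCommonTuple} to the narrowband setting where only the angular diversity, and not the delay diversity, is asymptotically available. Since $N_f$ is fixed (or equal to one), the projection basis in Eq.~(\ref{Eq:Smat}) effectively reduces to the pure spatial basis $\mW(N_h) \otimes \mW(N_v)$, and path isolation must be achieved through spatial beams alone. The overall strategy is: (i) re-establish the orthogonality of distinct angular groups in the spatial domain, (ii) decompose the projected channel into at most $N_c$ complex exponentials per beam, and (iii) invoke the generalized Prony argument already used for Corollary~\ref{coro:NcPathCommonTuple}.

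First, I would prove a narrowband analogue of Lemma~\ref{lemma:mutualOthogonal} and Lemma~\ref{lemma:orthLinearSpace}: for any two paths $p, q$ with $(\theta_p, \phi_p) \neq (\theta_q, \phi_q)$, the 3-D steering vectors ${\bf{a}}(\theta_p, \phi_p)$ and ${\bf{a}}(\theta_q, \phi_q)$ are asymptotically orthogonal as $N_v, N_h \to \infty$, by essentially the Dirichlet-kernel argument of Appendix~\ref{proof:mutualOthogonal} restricted to the spatial tensor factor. Consequently, the minimal spatial subspaces supporting the two paths (defined as in Eq.~(\ref{Eq:M_p}) but with $\mS$ replaced by $\mW(N_h) \otimes \mW(N_v)$) are mutually orthogonal in the limit, with the usual caveat that pathological angle pairs of the type Eq.~(\ref{Eq:phi_pi}) are excluded by restricting $\phi$ to $[-\pi/2, \pi/2]$.

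Next, I would partition the $P$ paths into groups sharing a common angular tuple $(\theta, \phi)$, with each group containing at most $N_c$ paths by hypothesis. Distinct groups live in mutually orthogonal spatial subspaces by the preceding step, while paths within the same group are supported on an identical set of beams and simply superpose at each beam index. For any beam $n$ associated with such a group, the projected time-varying coefficient $g_{u,n}(t)$ is then a sum of at most $N_c$ complex exponentials in $t$, each of the form $\alpha_k e^{j\omega_k t}$, where $\alpha_k$ absorbs the complex amplitude $\beta_k$, the narrowband delay response $e^{-j2\pi f \tau_k}$, and the UE-antenna phase, while $\omega_k$ is the time-invariant Doppler of the $k$-th path of the group.

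Finally, I would invoke the generalized Prony result used in the proof of Corollary~\ref{coro:NcPathCommonTuple}: $2N_c$ asymptotically accurate samples of a signal composed of $N_c$ complex exponentials with time-invariant frequencies suffice for error-free prediction at any horizon $N_d \in \mathbb{N}^+$. Applying this beam-by-beam over the $N_s$ significant bins and summing the squared errors across mutually orthogonal subspaces, the normalized prediction error in the left-hand side of the stated limit vanishes as $N_v, N_h \to \infty$. The main obstacle I anticipate is the first step, namely verifying cleanly that the spatial-domain orthogonality of the 3-D steering vectors survives the DFT projection without the delay-domain factor that underpinned Lemma~\ref{lemma:orthLinearSpace}; once this spatial-only isolation is in place, the combinatorial grouping by angular tuples and the per-beam application of the generalized Prony step proceed exactly as in Corollary~\ref{coro:NcPathCommonTuple}.
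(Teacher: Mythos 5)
Your proposal is correct and follows essentially the same route as the paper, which itself only sketches this proof: isolate distinct angular groups via spatial-domain orthogonality, observe that each non-negligible angle-delay bin then carries a superposition of at most $N_c$ complex exponentials with time-invariant Doppler frequencies, and apply the order-$N_c$ generalization of Lemma~\ref{lemma:pronyOrder1} requiring $2N_c$ samples. Your treatment is in fact more explicit than the paper's (which omits the full proof), particularly in verifying that the spatial-only steering-vector orthogonality survives without the delay factor once $\phi$ is restricted to $[-\pi/2,\pi/2]$ and $\cos\theta$, $\sin\phi$ are injective on their respective ranges.
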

\begin{proof}
\quad \emph{Proof:} This Corollary is a readily generalization of Corollary \ref{coro:NcPathCommonTuple}. Since $N_c$ paths share exactly the same angle, they are non-separable in spatial domain. Due to the limited bandwidth, the frequency resolution is also finite. As a result, the projection of the channel onto the spatial-frequency orthogonal basis results in a set of non-negligible angle-delay positions where the $N_c$ paths have non-zero coefficients. Each of these angle-delay positions has $N_c$ exponential signals. As a result, we would need $2 N_c$ accurate enough channel samples to compute the complete Prony coefficients in order to predict future channel. The full proof is omitted.
\end{proof}

Corollary \ref{coro:NcPathCommonTupleNB} indicates that our PAD still achieves very good performance even when only small bandwidth is used. However in this case more channel samples may be needed in order to compensate for the low frequency resolution.

\section{Dealing with noisy channel samples}\label{sec:estimationError}
The channel estimate at the base station is always corrupted by noise, which is expected to undermine the performances of our previous methods. Thus we propose to deal with noise with a supplementary method, which relies on the subspace structure of the channel sample matrix and the second-order long-term statistics of the noisy channel samples. It consists of the following two ingredients
\subsubsection{Tufts-Kumaresan's method}
The main idea of the Tufts-Kumaresan's method \cite{Tufts:1982} is to apply singular value decomposition (SVD) to the sample matrix, i.e., Eq. (\ref{Eq:Y}) or Eq. (\ref{Eq:mathcalG}), of the linear prediction equations, and then remove the contributions of small singular values. Taking the estimate of ${{\bf{p}}(u,n)}$
for example, the SVD of $\bm{\mathcal G}(u,n)$ can be written as:
\begin{align}\label{Eq:Gun_SVD}
\bm{\mathcal G}(u,n) &= \mU(u,n) \mathbf{\Sigma}(u,n) \mV^H(u,n) \\
&\approx \mU_s(u,n) \mathbf{\Sigma}_s(u,n) \mV_s^H(u,n),
\end{align}
where $\mathbf{\Sigma}_s(u,n)$ only contains the significant singular values of $\bm{\mathcal G}(u,n)$. The Tufts-Kumaresan's estimate of the Prony coefficients are given by
\begin{equation}\label{Eq:phat_tk_un}
{\hat{\bf{p}}_\text{tk}(u,n)} = - \mV_s(u,n) \mathbf{\Sigma}_s^{-1}(u,n)\mU_s^H(u,n){{\bf{g}}(u,n)} .
\end{equation}
Note that $\mathbf{\Sigma}_s(u,n)$ can be obtained in a way that the minimum number of singular values satisfy
\begin{equation}
\operatorname{tr}\left\{ \mathbf{\Sigma}_s(u,n) \right\} \geq \gamma_\text{tk}\operatorname{tr}\left\{ \mathbf{\Sigma}(u,n) \right\},
\end{equation}
where the threshold $\gamma_\text{tk}$ is no greater than 1, i.e., $\gamma_\text{tk} = 0.99$.
\subsubsection{Channel denoising with statistical information}
The noisy channel samples between all base station antennas and the $u$-th UE antenna at time $t$ and frequency $f$
can be modeled as ${{\tilde{\bf{h}}_u}(f,t)} \in \mathbb{C}^{1 \times N_t}$:
\begin{equation}\label{Eq:hhatuft}
{{\tilde{\bf{h}}_u}(f,t)} = {{{\bf{h}}_u}(f,t)} + \vn_u (f,t),
\end{equation}
where ${{{\bf{h}}_u}(f,t)}$ is the accurate channel and $\vn_u (f,t)$ is the independent and identically distributed (i.i.d.) complex Gaussian noise with zero-mean and covariance $\sigma_n^2$. It is easy to obtain the covariance matrix of the noisy channel at the base station:
\begin{equation}\label{Eq:Rhatf}
\tilde \mR = \mathbb{E} \left\{{{\tilde{\bf{H}}^H}(f,t)} {{\tilde{\bf{H}}}(f,t)} \right\},
\end{equation}
where the expectation is taken over time, frequency, or both. ${{\tilde{\bf{H}}}(f,t)}$ is defined as
\begin{equation}
{{\tilde{\bf{H}}}(f,t)} \triangleq {[\begin{array}{*{20}{c}}
{{\tilde{\bf{h}}_1^T}(f,t)}&{{\tilde{\bf{h}}^T_2}(f,t)}& \cdots &{{\tilde{\bf{h}}^T_{{N_r}}}(f,t)}
\end{array}]^T}.
\end{equation}

From Eq. (\ref{Eq:hhatuft}) we have
\begin{equation}\label{Eq:Rf}
\tilde\mR = \mR + N_r \sigma_n^2 \mI,
\end{equation}
where
\begin{equation}
\mR = \mathbb{E} \left\{{{{\bf{H}}^H}(f,t)} {{{\bf{H}}}(f,t)} \right\},
\end{equation}
with ${{{\bf{H}}}(f,t)}$ being the accurate counterpart of ${{\tilde{\bf{H}}}(f,t)}$.
Due to the large number of base station antennas and the limited scattering environment, the channel covariance matrix $\mR$ has a low-rankness property \cite{yin:13} \cite{adhikary:13}, which means a fraction of the eigenvalues of $\mR$ are very close to zero. Thus we may exploit this property to have an estimate of the power of noise. The eigen-decomposition of $\tilde \mR$ is written as $\tilde \mR = \tilde{\mU} \tilde{\mathbf{\Sigma}} \tilde{\mU}^H$ where $\tilde{\mathbf{\Sigma}} = {\mathop{\rm diag}\nolimits} \{ {\sigma_1,...,\sigma_{N_t}}\}$ and $\tilde{\mU}$ contains all the eigen-vectors of $\tilde \mR$. The estimate of the noise power ${\sigma}^2_n$ is obtained by simply averaging the smallest eigenvalues of $\tilde{\mR}$. A linear filter $\bm{\mathcal{W}}$ can be derived for channel denoising purpose:
\begin{equation}\label{Eq:Wf}
\bm{\mathcal{W}} = \mathop {\arg \min }\limits_{\bm{\mathcal{W}}} \mathbb{E} \left\{\| {{\tilde{\bf{H}}}(f,t)}\bm{\mathcal{W}} - {{{\bf{H}}}(f,t)} \|_F^2 \right\}.
\end{equation}
The solution is given by Proposition \ref{propBoundedness}.
\begin{proposition}\label{propBoundedness}
The linear solution to the optimization problem of Eq. (\ref{Eq:Wf}) yields
\begin{equation}\label{Eq:Wfsolution}
\bm{\mathcal{W}} = \tilde{\mU} \mD \tilde{\mU}^H,
\end{equation}
where $\mD$ is a diagonal matrix with its $i$-th ($i = 1, \cdots, N_t$) diagonal entry being ${\frac{{{\sigma _i} - N_r \hat \sigma _n^2}}{{{\sigma _i}}}}$.

\end{proposition}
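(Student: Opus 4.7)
The plan is to recognize Eq.~(\ref{Eq:Wf}) as a standard linear MMSE (Wiener) estimation problem with a matrix argument, solve it in closed form, and then translate that closed form into the spectral factorization announced in the statement.

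First, I would expand the Frobenius norm using $\|\mX\|_F^2 = \operatorname{tr}\{\mX^H \mX\}$ and push the expectation through the trace. Using the model $\tilde{\mH}(f,t) = \mH(f,t) + \mN(f,t)$ (the stacked version of Eq.~(\ref{Eq:hhatuft})), where $\mN(f,t)$ has i.i.d.\ zero-mean entries of variance $\sigma_n^2$ and is independent of $\mH(f,t)$, the cross-terms simplify: $\mathbb{E}\{\tilde{\mH}^H \mH\} = \mathbb{E}\{\mH^H \mH\} = \mR$ since $\mathbb{E}\{\mN^H \mH\} = \vnull$. Thus the cost becomes
\begin{equation*}
J(\bm{\mathcal{W}}) = \operatorname{tr}\bigl\{ \bm{\mathcal{W}}^H \tilde{\mR}\, \bm{\mathcal{W}} - \bm{\mathcal{W}}^H \mR - \mR\, \bm{\mathcal{W}} + \mR \bigr\}.
\end{equation*}

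Next, I would take the Wirtinger derivative with respect to $\bm{\mathcal{W}}^*$ and set it to zero (equivalently, complete the square in $\bm{\mathcal{W}}$), obtaining the normal equation $\tilde{\mR}\, \bm{\mathcal{W}} = \mR$, and hence $\bm{\mathcal{W}} = \tilde{\mR}^{-1} \mR$. Substituting $\mR = \tilde{\mR} - N_r \sigma_n^2 \mI$ from Eq.~(\ref{Eq:Rf}) gives $\bm{\mathcal{W}} = \mI - N_r \sigma_n^2 \tilde{\mR}^{-1}$.

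Finally, I would plug in the eigen-decomposition $\tilde{\mR} = \tilde{\mU} \tilde{\mathbf{\Sigma}} \tilde{\mU}^H$, use $\mI = \tilde{\mU}\tilde{\mU}^H$, and collect:
\begin{equation*}
\bm{\mathcal{W}} = \tilde{\mU}\bigl(\mI - N_r \sigma_n^2 \tilde{\mathbf{\Sigma}}^{-1}\bigr)\tilde{\mU}^H,
\end{equation*}
whose interior factor is diagonal with $i$-th entry $(\sigma_i - N_r \sigma_n^2)/\sigma_i$, matching $\mD$ after substituting the estimate $\hat{\sigma}_n^2$ for the (unknown) true $\sigma_n^2$. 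The main subtlety, rather than any hard calculation, is to justify that $\sigma_n^2$ is being replaced by its estimate $\hat{\sigma}_n^2$ obtained from averaging the smallest eigenvalues of $\tilde{\mR}$; this is legitimate because the low-rankness of $\mR$ assumed in the text makes that average asymptotically equal to the true noise power. A brief remark that the solution is global (since $\tilde{\mR}\succeq N_r\sigma_n^2 \mI \succ \vnull$ makes $J$ strictly convex in $\bm{\mathcal{W}}$) would close the argument.
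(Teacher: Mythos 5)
Your derivation is correct and follows exactly the route the paper sketches (and omits the details of): the LMMSE criterion, setting the derivative of the trace of the error covariance to zero to get $\tilde{\mR}\,\bm{\mathcal{W}} = \mR$, then substituting $\mR = \tilde{\mR} - N_r\sigma_n^2\mI$ and the eigen-decomposition of $\tilde{\mR}$. Your added remarks on strict convexity and on replacing $\sigma_n^2$ by the eigenvalue-based estimate $\hat{\sigma}_n^2$ are sound and only make the omitted proof more complete.
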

\begin{proof}
\quad \emph{Proof:} The derivation is based on the linear minimum mean square error (LMMSE) criterion. One may readily obtain this result by computing the partial derivative of the trace of error covariance matrix with respect to ${{{\bf{H}}}(f,t)}$ \cite{Hjorungnes:07} and letting it equal to zero. The detailed proof is omitted.
\end{proof}

Note that the covariance matrix is computed based on samples of all $N_r$ UE antennas, since the scattering environments experienced by all co-located $N_r$ antennas is very similar. The denoising filter $\bm{\mathcal{W}}$ can also be built for each UE antenna, however with higher complexity.

\section{Numerical Results}\label{sec:numericalResult}
This section contains simulation results of our proposed channel prediction schemes. The basic simulation parameters are listed in Table \ref{Tb:basicParas}. We mainly adopt the CDL-A channel model, unless otherwise notified. The number of multipath is 460, i.e., for each UE, there are 23 clusters of multipath with each cluster containing 20 rays. The Root Mean Square (RMS) angular spreads of the azimuth departure angle, elevation departure angle, azimuth arrival angle, and elevation arrival angle are $87.1^\circ$, $33.6^\circ$, $102.1^\circ$, and $24.7^\circ$ respectively.
Consider a typical setting of 5G at 3.5 GHz with 30 kHz of subcarrier spacing. Each slot, with a duration of 0.5 ms, contains 14 OFDM symbols. We assume that UE sends one SRS signal in each slot, which means one channel sample is available every $\triangle T = 0.5$ ms.
\begin{table}[h]
\caption{Basic simulation parameters}\label{Tb:basicParas}
\centering
\begin{tabular}{|p{2.1cm}|p{5.5cm}|}
  \hline
  Scenario          & 3D Urban Macro (3D UMa) \\
  \hline
  Carrier frequency           & 3.5 GHz \\
    \hline
  Subcarrier spacing         & 30 kHz \\
    \hline
  Bandwidth  &  20 MHz (51 RBs)\\
    \hline
  Number of UEs  &  8\\
    \hline
  BS antenna configuration    & $(\underline{M},\underline{N},\underline{P},\underline{M}_g,\underline{N}_g) = (2,8,2,1,1)/(4,8,2,1,1)$, $(dH, dV) = (0.5, 0.8)\lambda$, the polarization angles are $\pm 45^\circ$ \\
    \hline
  UE antenna configuration    & $(\underline{M},\underline{N},\underline{P},\underline{M}_g,\underline{N}_g) = (1,1,2,1,1)$,  the polarization angles are $0^\circ$ and $90^\circ$\\
    \hline
  Channel model     & CDL-A \\
    \hline
  Delay spread       & 300 ns \\
    \hline
  DL precoder       & EZF \\
    \hline
  UE receiver       & MMSE-IRC \\
    \hline
  CSI delay       & 4 ms \\
    \hline
  Number of paths       & 460 \\
    \hline
\end{tabular}
\end{table}
The tuple $(\underline{M},\underline{N},\underline{P},\underline{M}_g,\underline{N}_g)$ in Table \ref{Tb:basicParas} means the antenna array is composed of $\underline{M}_g \underline{N}_g$ panels of UPAs with $\underline{M}_g$ being the number of panels in a column and $\underline{N}_g$ the number of panels in a row. Furthermore, each antenna panel has $\underline{M}$ rows and $\underline{N}$ columns of antenna elements in the same polarization. The number of polarizations is always $\underline{P} = 2$. Thus the total number of antennas is $\underline{M} \times \underline{N} \times \underline{P} \times \underline{M}_g \times \underline{N}_g$ for a certain BS or UE. We consider 20 MHz of bandwidth where one channel estimate per each resource block (RB) is available in frequency domain.
The DL precoder is the Eigen Zero-Forcing (EZF) \cite{Sun:2010TSP} and the receiver at UE side is Minimum Mean Square Error - Interference Rejection Combining (MMSE-IRC). Each UE receives one data stream from BS. The DL spectral efficiency is our main performance metric, which is computed as $\log_2(1+\text{SINR})$ averaged over the whole bandwidth.

We first ignore the channel sample error and plot the spectral efficiency as a function of SNR at UE side. We show the performances of our proposed vector Prony algorithm (denoted by Vec Prony) and PAD algorithm with 60 km/h of velocity level for all UEs in Fig. \ref{fig:SE_32T2R_60km} for the case $N_t = 32$ and in Fig. \ref{fig:SE_64T2R_60km} for the case $N_t = 64$. The performances of 0, 3, and 60 km/h of UE speeds without channel prediction are also added as reference curves.
The curves labeled by ``FIR Wiener" are obtained by the use of a classical linear predictor based on AR modeling of channel variations (for instance as proposed by \cite{Truong:2013}).
In all figures of this section, $N$ denotes the order of the predictor.
We may observe from Fig. \ref{fig:SE_32T2R_60km} and Fig. \ref{fig:SE_64T2R_60km} that our proposed algorithms nearly approach the ideal case where UEs are stationary and the channels are time-invariant. It is interesting to note that the vector Prony method and the PAD method both outperform the low-mobility scenario of 3 km/h without channel prediction.
Note that the FIR Wiener predictor gives only moderate prediction gains. In fact it is not performing as well because models that account for multipath space-time structure (such as the one in \cite{3gpp:38.901}) do not necessarily conform with the simple AR(1) channel aging model.

\begin{figure}[h]
  \centering
  \includegraphics[width=3.5in]{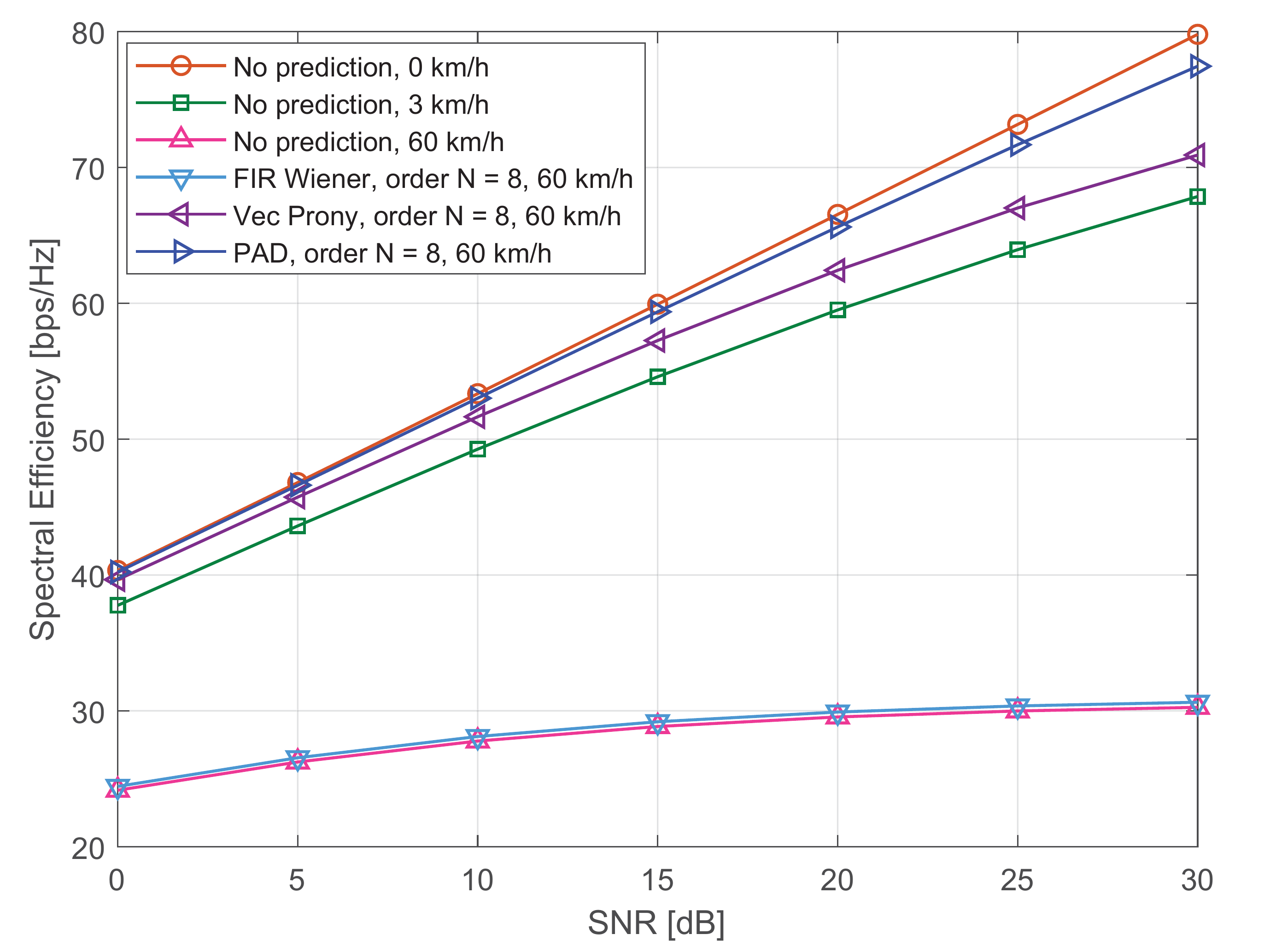}\\
  \caption{The spectral efficiency vs. SNR, $N_t = 32$, noise-free channel samples, CDL-A model, 4 ms of CSI delay.} \label{fig:SE_32T2R_60km}
\end{figure}

\begin{figure}[h]
  \centering
  \includegraphics[width=3.5in]{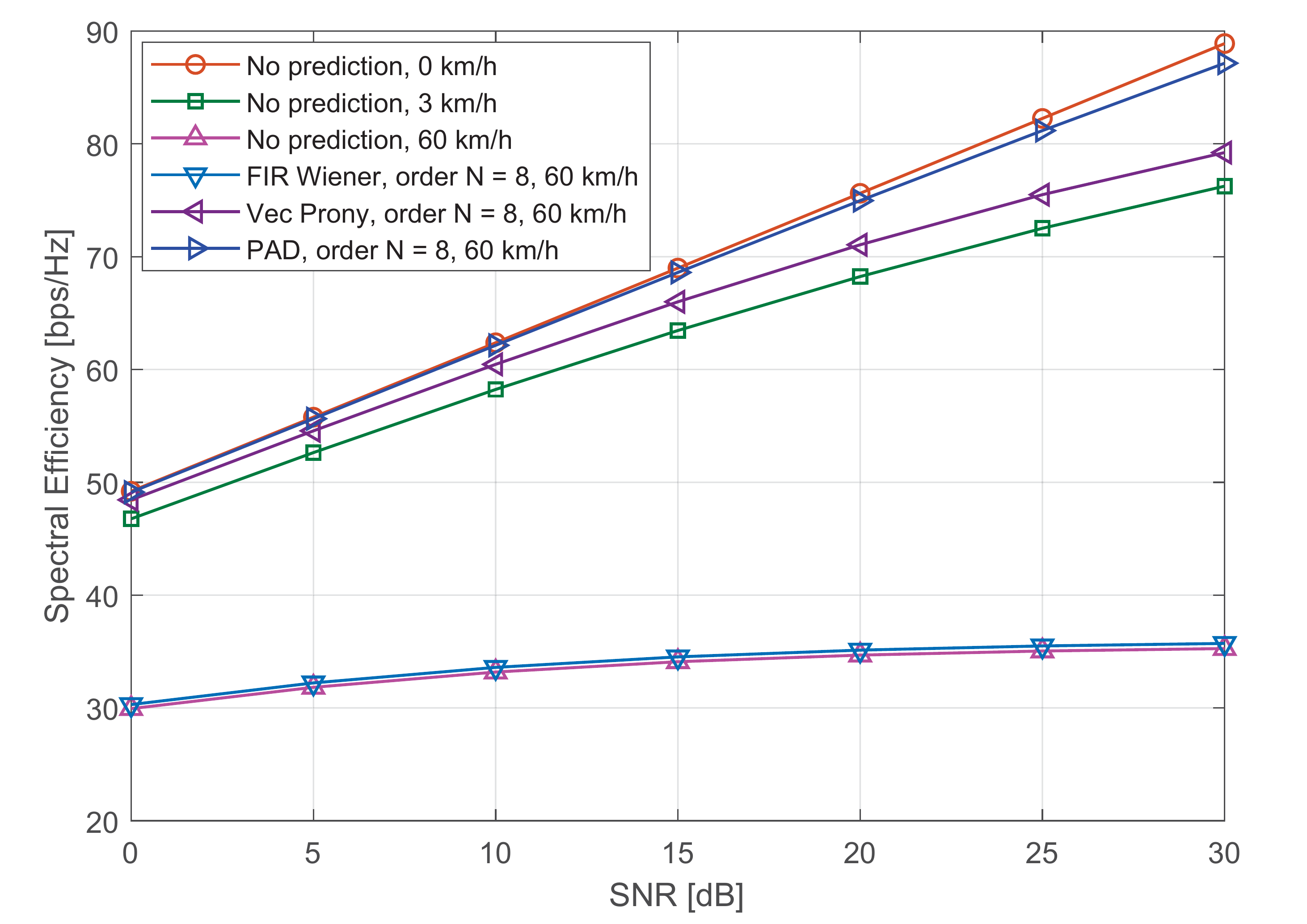}\\
  \caption{The spectral efficiency vs. SNR, $N_t = 64$, noise-free channel samples, CDL-A model, 4 ms of CSI delay.} \label{fig:SE_64T2R_60km}
\end{figure}

\begin{figure}[h]
  \centering
  \includegraphics[width=3.5in]{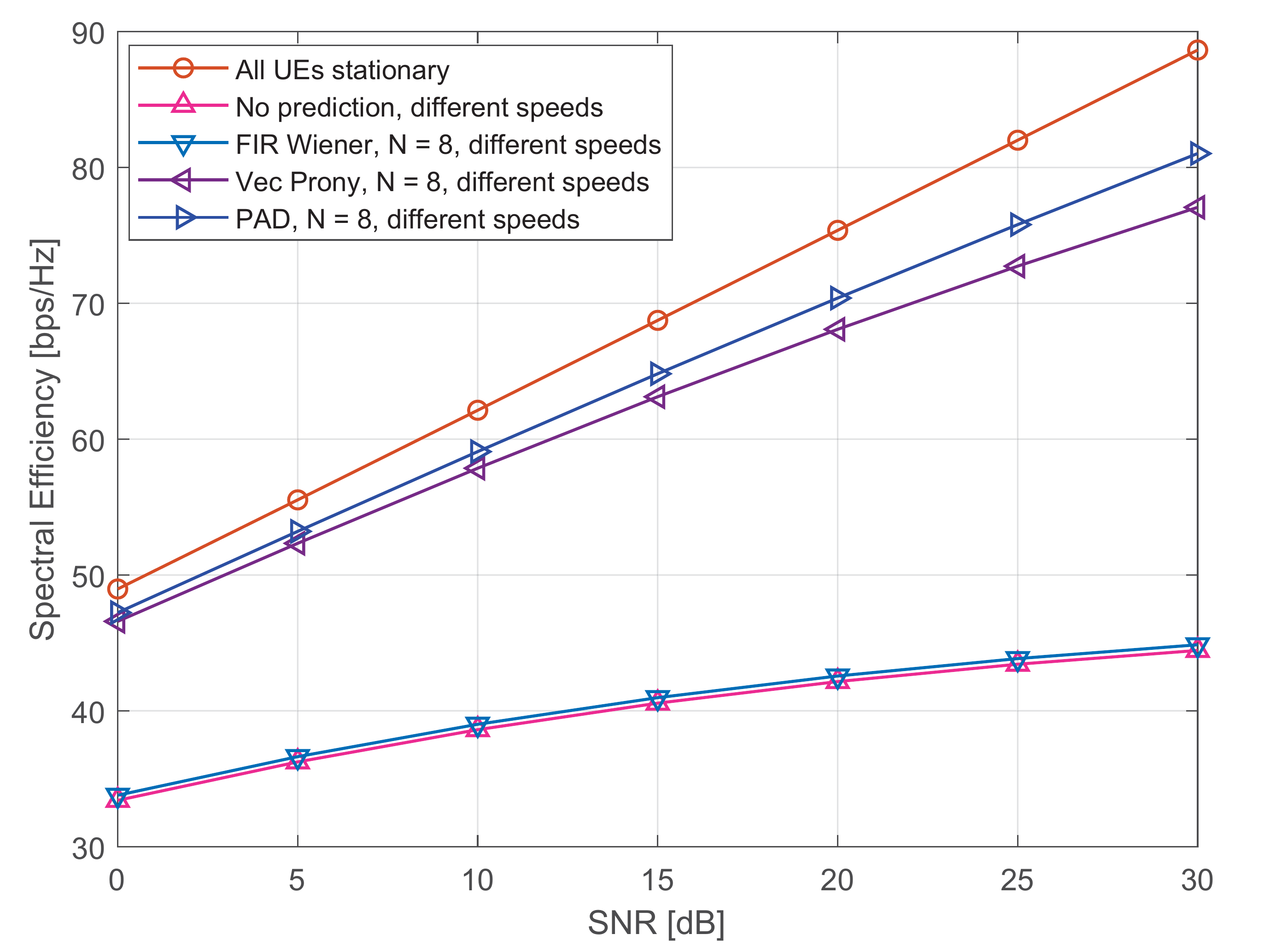}\\
  \caption{The spectral efficiency vs. SNR, $N_t = 64$, noise-free channel samples, CDL-A model, 4 ms of CSI delay. Multiple UEs have different speeds: two UEs move at 3 km/h, two at 30 km/h, two at 60 km/h, and two at 90 km/h.} \label{fig:SE_64T2R_differentSpeeds}
\end{figure}

Fig. \ref{fig:SE_64T2R_differentSpeeds} shows the performances of a realistic setting where multiple UEs move at different speeds. For simplicity, we keep the order of all predictors to $N=8$ despite the fact that low-mobility UEs may only need a smaller order. One can observe that our proposed methods works well in this setting.

Fig. \ref{fig:PredErr_Ant} shows the channel prediction error as a function of BS antennas. The channel prediction error is defined as
\begin{align}
\varepsilon = 10 \log\left\{\mathbb{E} \frac{\left\|\hat{\mH} - \mH \right\|^2_F}{\left\|\mH \right\|^2_F} \right\},
\end{align}
where $ \mH \in \mathbb{C}^{N_r \times N_t}$ and $\hat{\mH}\in \mathbb{C}^{N_r \times N_t}$ are the channel matrix and its prediction respectively. The expectation is taken over time, frequency, and UEs. In Fig. \ref{fig:PredErr_Ant} the numbers of BS antennas are $N_t = 4, 8, 32, 128, 512, 2048$, with corresponding layouts being $(\underline{M},\underline{N},\underline{P},\underline{M}_g,\underline{N}_g) = (1,2,2,1,1)$, $(1,4,2,1,1)$, $(2,8,2,1,1)$, $(4,16,2,1,1)$, $(8,32,2,1,1)$, $(16, 64, 2, 1, 1)$ respectively. We may observe from Fig. \ref{fig:PredErr_Ant} that the prediction accuracy of our PAD algorithm keeps increasing with the number of BS antennas, which is inline with Theorem \ref{theoNoiseFreeEstimationErr}.

\begin{figure}[h]
  \centering
  \includegraphics[width=3.5in]{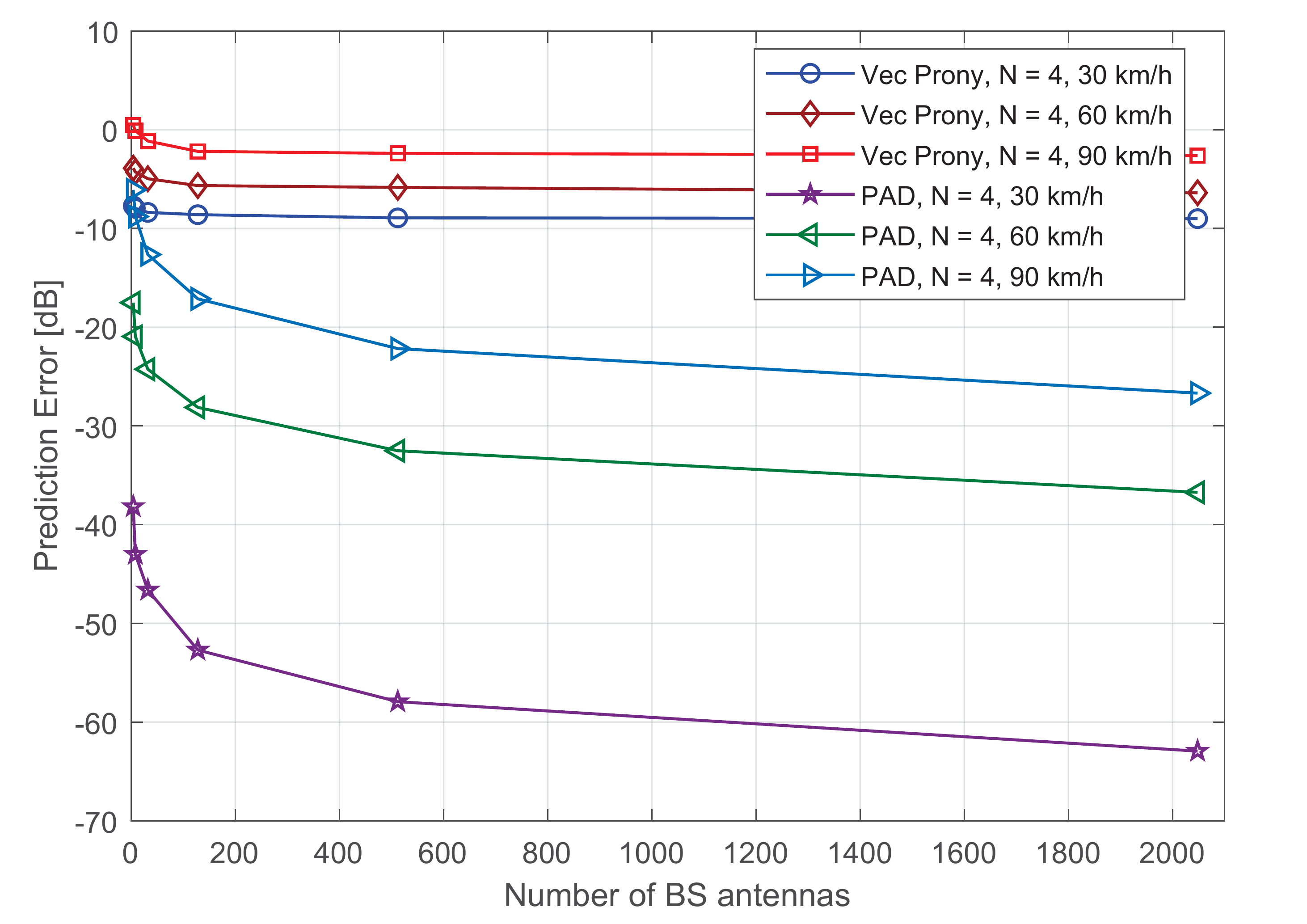}\\
  \caption{The prediction error vs. the number of BS antennas $N_t$, noise-free channel samples, CDL-A model, 4 ms of CSI delay.} \label{fig:PredErr_Ant}
\end{figure}

In Fig. \ref{fig:SE_64T2R_CDL_D} we change the channel model to CDL-D \cite{3gpp:38.901}, where a LOS component is present among all paths. We show the performances of the algorithms with 90 km/h of moving speed for all UEs. The gains are still significant even in such high mobility scenarios. Note that in this figure we have only 2 rows and 8 columns of dual-polarized BS antennas and therefore the spatial resolution is quite limited. More gains will be expected with larger number of antennas.
\begin{figure}[h]
  \centering
  \includegraphics[width=3.5in]{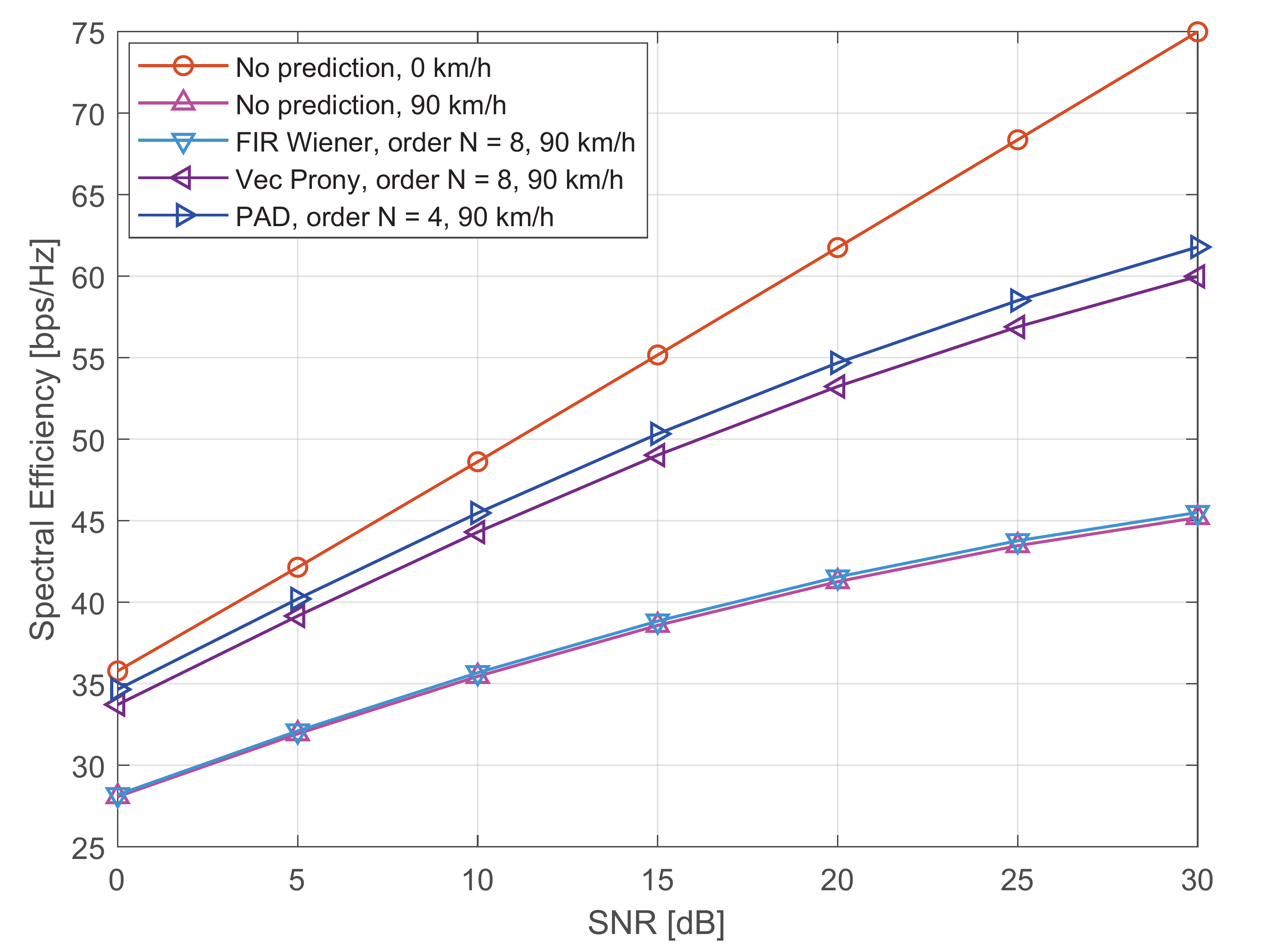}\\
  \caption{The spectral efficiency vs. SNR, $N_t = 32$, noise-free channel samples, CDL-D model with 261 paths, 4 ms of CSI delay. The RMS angular spreads of the azimuth departure angle, elevation departure angle, azimuth arrival angle, and elevation arrival angle are $47.9^\circ$, $7.1^\circ$, $89.9^\circ$, and $5.4^\circ$ respectively.} \label{fig:SE_64T2R_CDL_D}
\end{figure}

Now the channel estimation error is taken into consideration, assuming the ratio between the channel power and the power of estimation noise is 20 dB. we plot the performances of the vector Prony-based algorithm and the PAD algorithm, both combined with the denoising methods given by Sec. \ref{sec:estimationError} in Fig. \ref{fig:SE_noisyCHE_30km}.
\begin{figure}[h]
  \centering
  \includegraphics[width=3.5in]{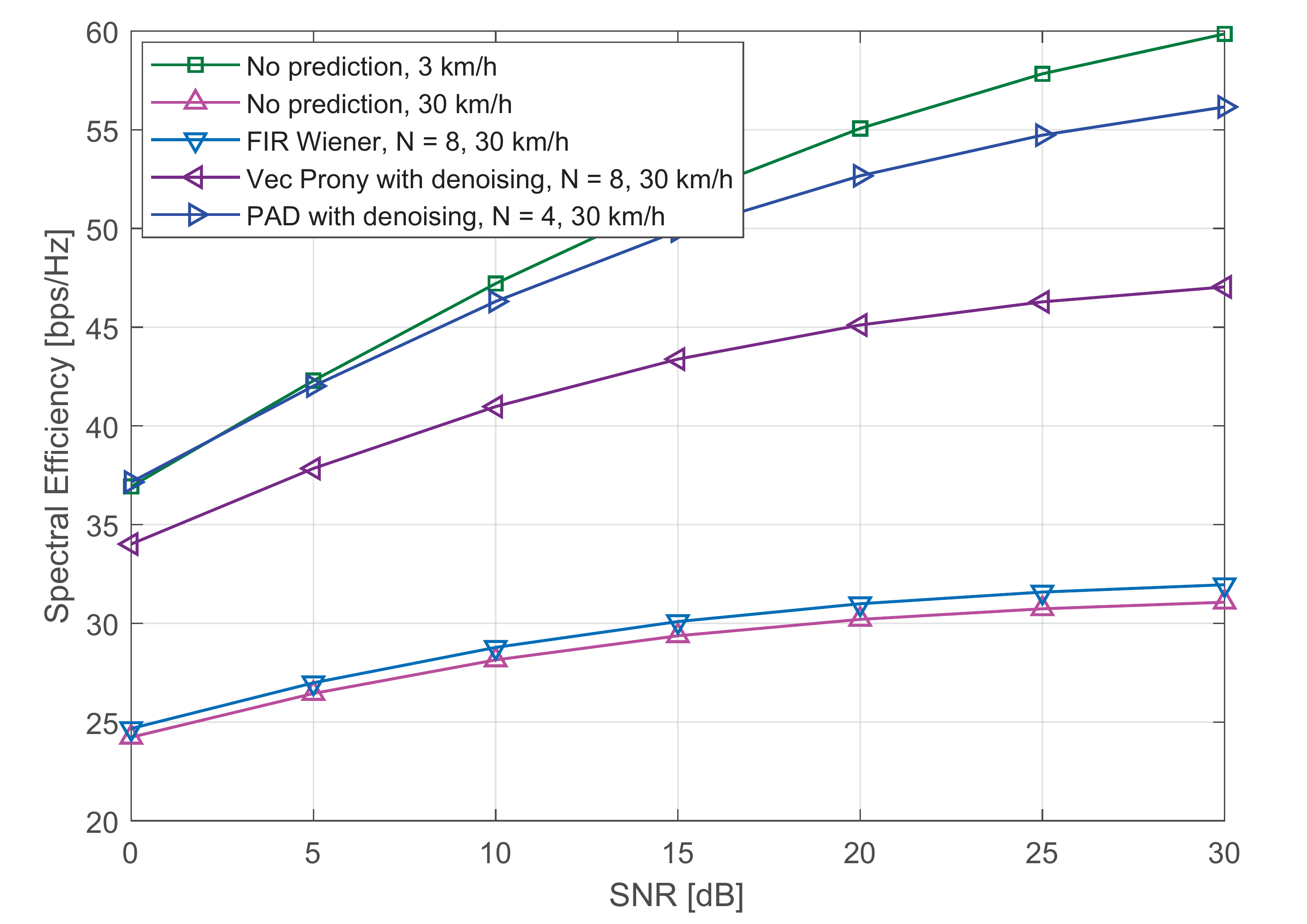}\\
  \caption{The spectral efficiency vs. SNR, $N_t = 32$, noisy channel samples, CDL-A model, 4 ms of CSI delay. } \label{fig:SE_noisyCHE_30km}
\end{figure}
As we may observe, our proposed PAD algorithm combined with denoising methods in moderate mobility scenario of 30 km/h is very close to the low-mobility scenario of 3 km/h, and thus proves its robustness when channel samples are corrupted by noise.

\section{Conclusions}\label{conclusion}
In this paper we addressed the practical challenge of massive MIMO -- the mobility problem. We proposed vector Prony algorithm and Prony-based angular-delay domain channel prediction method which are based on the angle-delay-Doppler structure of the channel.
Our theoretical analysis proves that the proposed PAD method is able to achieve asymptotically error-free prediction, provided that only two accurate channel samples are available. In case the channel samples are inaccurate, we proposed to combine our algorithms with denoising methods based on the subspace structure and the long-term statistics of the channel observations.
Simulation results show that in a practical setting with moderate to high mobility, several milliseconds of CSI delay, and rich scattering environment, our proposed methods achieve nearly ideal performance of stationary setting even with moderate number of base station antennas and relatively small bandwidth.

Finally, our work also opens a new prospect to further enhance the spectral efficiency of massive MIMO by offering more multiplexing gains.
In practice, the maximum number of simultaneously served UEs is primarily determined by coherence time and coherence bandwidth \cite{marzetta:10a}. Although demonstrated in time-domain prediction, our methods can also be generalized to frequency domain extrapolation. As a result, they have the potential of greatly reduce the time-frequency resources consumed by pilots of one user and thus lead to higher multi-user multiplexing gains given a fixed coherence time and coherence bandwidth.

\appendix
\subsection{Proof of Lemma \ref{lemma:mutualOthogonal}:}\label{proof:mutualOthogonal}
\begin{proof}
We decompose the proof into three sub-problems below. Sub-problem 1:
\begin{align}\label{Eq:a_v_orth}
\lim_{N_v \rightarrow \infty} \frac{\va_v(\theta_p)^H \va_v(\theta_q)}{\sqrt{N_v}} = 0, \text{ when } \theta_p \neq \theta_q.
\end{align}
Sub-problem 2:
\begin{align}\label{Eq:a_h_orth}
& \lim_{N_h \rightarrow \infty} \frac{{{\bf{a}}_h}(\theta_p ,\phi_p )^H {{\bf{a}}_h}(\theta_q ,\phi_q )}{\sqrt{N_h}} = 0, \\
& \text{ when } {\sin ({\theta _p})\sin ({\phi _p})} \neq \sin ({\theta _q})\sin ({\phi _q}). \nonumber
\end{align}
And sub-problem 3:
\begin{align}\label{Eq:b_tau_orth}
\lim_{N_f \rightarrow \infty} \frac{\vb(\tau_p)^H \vb(\tau_q)}{\sqrt{N_f}} = 0, \text{ when } \tau_p \neq \tau_q.
\end{align}

Starting from sub-problem 1, we may write
\begin{small}
\begin{align}
&{{\bf{a}}_v}{({\theta _p})^H}{{\bf{a}}_v}({\theta _q}) = \sum\limits_{n = 0}^{{N_v} - 1} {{e^{ - j2\pi \frac{{n{D_v}\cos ({\theta _p})}}{{{\lambda _0}}}}}{e^{j2\pi \frac{{n{D_v}\cos ({\theta _q})}}{{{\lambda _0}}}}}} \\
& = \sum\limits_{n = 0}^{{N_v} - 1} {{e^{j2\pi \frac{{n{D_v}\left( {\cos ({\theta _q}) - \cos ({\theta _p})} \right)}}{{{\lambda _0}}}}}}  = \frac{{1 - {e^{j2\pi \frac{{{N_v}{D_v}\left( {\cos ({\theta _q}) - \cos ({\theta _p})} \right)}}{{{\lambda _0}}}}}}}{{1 - {e^{j2\pi \frac{{{D_v}\left( {\cos ({\theta _q}) - \cos ({\theta _p})} \right)}}{{{\lambda _0}}}}}}} \nonumber
\end{align}
\end{small}
Since $\theta _p, \theta _q \in [0, \pi]$ and $\theta _p \neq \theta _q$, we can easily see that $|{{\bf{a}}_v}{({\theta _p})^H ({\theta _q})}|$ is a finite value and that
\begin{equation}
\lim_{N_v \rightarrow \infty} \frac{\va_v(\theta_p)^H \va_v(\theta_q)}{\sqrt{N_v}} = 0.
\end{equation}
Then, for sub-problem 2, we have
\begin{align}
& {{\bf{a}}_h}{({\theta _p},{\phi _p})^H}{{\bf{a}}_h}({\theta _q},{\phi _q})\\
 = & \sum\limits_{n = 0}^{{N_h} - 1} {{e^{j2\pi \frac{{n{D_h}\left( {\sin ({\theta _q})\sin ({\phi _q}) - \sin ({\theta _p})\sin ({\phi _p})} \right)}}{{{\lambda _0}}}}}} \\
 = & \frac{{1 - {e^{j2\pi \frac{{{N_h}{D_h}\left( {\sin ({\theta _q})\sin ({\phi _q}) - \sin ({\theta _p})\sin ({\phi _p})} \right)}}{{{\lambda _0}}}}}}}{{1 - {e^{j2\pi \frac{{{D_h}\left( {\sin ({\theta _q})\sin ({\phi _q}) - \sin ({\theta _p})\sin ({\phi _p})} \right)}}{{{\lambda _0}}}}}}}.
\end{align}
When the term ${\sin ({\theta _q})\sin ({\phi _q}) - \sin ({\theta _p})\sin ({\phi _p})}$ is not zero, we may readily see that ${{\bf{a}}_h}{({\theta _p},{\phi _p})^H}{{\bf{a}}_h}({\theta _q},{\phi _q})$ is a finite value and sub-problem 2 is proved. Sub-problem 3 has a similar structure as sub-problem 1 and the proof is omitted. We may further write
\begin{small}
\begin{align}
& \frac{{{\bf{v}}_p^H{{\bf{v}}_q}}}{\sqrt{{N_v}{N_h}{N_f}}} =  \frac{{{{\left( {{\bf{b}}({\tau _p}) \otimes {\bf{a}}({\theta _p},{\phi _p})} \right)}^H}\left( {{\bf{b}}({\tau _q}) \otimes {\bf{a}}({\theta _q},{\phi _q})} \right)}}{\sqrt{{N_v}{N_h}{N_f}}}\\ \nonumber
& = \frac{{{{\left( {{\bf{b}}({\tau _p}) \otimes {{\bf{a}}_h}({\theta _p},{\phi _p}) \otimes {{\bf{a}}_v}({\theta _p})} \right)}^H}\left( {{\bf{b}}({\tau _q}) \otimes {{\bf{a}}_h}({\theta _q},{\phi _q}) \otimes {{\bf{a}}_v}({\theta _q})} \right)}}{\sqrt{{N_v}{N_h}{N_f}}}\\ \nonumber
& = \frac{{\left( {{\bf{b}}{{({\tau _p})}^H}{\bf{b}}({\tau _q})} \right)\left( {{{\bf{a}}_h}{{({\theta _p},{\phi _p})}^H}{{\bf{a}}_h}({\theta _q},{\phi _q})} \right)\left( {{{\bf{a}}_v}{{({\theta _p})}^H}{{\bf{a}}_v}({\theta _q})} \right)}}{\sqrt{{N_v}{N_h}{N_f}}}
\end{align}
\end{small}

It is clear that as long as one of the three terms ${\bf{b}}{{({\tau _p})}^H}{\bf{b}}({\tau _q})/\sqrt{N_f}$, ${{{\bf{a}}_h}{{({\theta _p},{\phi _p})}^H}{{\bf{a}}_h}({\theta _q},{\phi _q})}/\sqrt{N_h}$, and ${{\bf{a}}_v}{{({\theta _p})}^H}{{\bf{a}}_v}({\theta _q})/\sqrt{N_v}$ goes to zero, then Eq. (\ref{Eq:vpvq}) holds, since the absolute values of the three terms are no greater than 1. According to the proof of sub-problem 1, we conclude that condition $\theta_p \neq \theta_q$ is a sufficient condition of the equality Eq. (\ref{Eq:vpvq}), so is the condition $\tau_p \neq \tau_q$. We examine the case when $\theta_p = \theta_q$ and $\tau_p = \tau_q$. If $\theta_p = \theta_q = 0 \text{ or } \pi$, then $\phi_p = \phi_q = 0$ due to our definition in Eq. (\ref{Eq:theta0_pi}). In this case $(\theta _{p}, {\phi _{p}}, \tau_p) = (\theta _{q}, {\phi _q}, \tau_q)$, which contradicts the condition Eq. (\ref{Eq:noiseFreeCond}). While if $\theta_p = \theta_q \neq 0 \text{ or } \pi$, then according to sub-problem 2 the term ${{{\bf{a}}_h}{{({\theta _p},{\phi _p})}^H}{{\bf{a}}_h}({\theta _q},{\phi _q})}/N_h$ does not go to zero only when $\sin ({\theta _p}) = \sin ({\theta _q})$, or equivalently, $\theta _p + \theta _q = 0 \text{ or } \pi$. Thus, Lemma \ref{lemma:mutualOthogonal} is proved.
\end{proof}

\subsection{Proof of Lemma \ref{lemma:orthLinearSpace}:}\label{proof:orthLinearSpace}
\begin{proof}
Without loss of generality, we assume $0 < \theta_p < \theta_q < \pi, -\pi/2 < \phi_p < \phi_q < \pi/2, 0 < \tau_p < \tau_q$ and ignore the edge cases of $\theta = 0 \text{ or } \pi, \phi = \pm \pi/2$. Since $\theta_p \neq \theta_q$, we may define a positive value $\Delta_\theta$ such that
\begin{equation}
    0 \leq \theta_p - \Delta_\theta < \theta_p + \Delta_\theta < \theta_q - \Delta_\theta < \theta_q + \Delta_\theta \leq \pi.
\end{equation}
Similarly, define a positive value $\Delta_\phi$ and $\Delta_\tau$ such that
\begin{align*}
\begin{gathered}
- \frac{\pi}{2} \leq \phi_p - \Delta_\phi < \phi_p + \Delta_\phi < \phi_q - \Delta_\phi < \phi_q + \Delta_\phi \leq \frac{\pi}{2} \\
    0 \leq \tau_p - \Delta_\tau < \tau_p + \Delta_\tau < \tau_q - \Delta_\tau < \tau_q + \Delta_\tau.
\end{gathered}
\end{align*}
Define two joint probability density functions (PDF) as $p_p(\theta, \phi, \tau)$ and $p_q(\theta, \phi, \tau)$ such that
\begin{align}
& \int_{{\theta _p} - {\Delta _\theta }}^{{\theta _p} + {\Delta _\theta }} {\int_{{\phi _p} - {\Delta _\phi }}^{{\phi _p} + {\Delta _\phi }} {\int_{{\tau _p} - {\Delta _\tau }}^{{\tau _p} + {\Delta _\tau }} {{p_p}(\theta ,\phi ,\tau )} } } d\theta d\phi d\tau  = 1 \\
& \int_{{\theta _q} - {\Delta _\theta }}^{{\theta _q} + {\Delta _\theta }} {\int_{{\phi _q} - {\Delta _\phi }}^{{\phi _q} + {\Delta _\phi }} {\int_{{\tau _q} - {\Delta _\tau }}^{{\tau _q} + {\Delta _\tau }} {{p_q}(\theta ,\phi ,\tau )} } } d\theta d\phi d\tau  = 1.
\end{align}
Define two sets of tuples $\Omega_p$ and $\Omega_q$ such that for $i = p, q$,
\begin{align}
\Omega_i \defi & \left\{ (\theta ,\phi ,\tau )|\theta  \in ({\theta _i} - {\Delta _\theta },{\theta _i} + {\Delta _\theta }) \cap \right. \\
 & \left. \phi  \in ({\phi _i} - {\Delta _\phi },{\phi _i} + {\Delta _\phi }) \cap \tau  \in ({\tau _i} - {\Delta _\tau },{\tau _i} + {\Delta _\tau }) \right\}. \nonumber
\end{align}
The joint PDF satisfies
\begin{align}
0 < {p_i} < \infty , \text{ when } (\theta ,\phi ,\tau ) \in \Omega_i, i = p,q.
\end{align}
For ease of exposition, we let the PDF be a constant within its angular and delay support, i.e.,
\begin{align}
{p_i} =  \frac{1}{8 \Delta _\theta \Delta _\phi \Delta _\tau} , \text{ when } (\theta ,\phi ,\tau ) \in \Omega_i, i = p,q.
\end{align}
Assuming each path realization is generated according to the joint PDF $p_i$ and has a i.i.d. random phase, we can write the covariance matrix of the random paths as
\begin{equation}
\mathcal{R}_i = \int\limits_{(\theta ,\phi ,\tau ) \in {\Omega _i}} {{\bf{v}}(\theta ,\phi ,\tau ){\bf{v}}{{(\theta ,\phi ,\tau )}^H}{p_i}} d\theta d\phi d\tau,
\end{equation}
where $\vv(\theta, \phi, \tau) = {{\bf{b}}({\tau}) \otimes {{\bf{a}}_h}({\theta},{\phi}) \otimes {{\bf{a}}_v}({\theta})}$. 
We now prove that any normalized $\vv(\theta, \phi, \tau)$ with $(\theta, \phi, \tau) \not\in \Omega _i$ falls into the null space of $\mathcal{R}_i$. Consider an arbitary tuple $(\theta_o, \phi_o, \tau_o) \not\in \Omega _i$. The corresponding generalized vector with tuple $(\theta_o, \phi_o, \tau_o)$ is denoted as $\vv_o$.
Then we can write
\begin{align}
& \mathop {\lim }\limits_{{N_v},{N_h},{N_f} \to \infty } \frac{{{{\bf{v}}_o}^H}}{{\sqrt {{N_v}{N_h}{N_f}} }}\mathcal{R}_i\frac{{{{\bf{v}}_o}}}{{\sqrt {{N_v}{N_h}{N_f}} }}\\
& = \mathop {\lim }\limits_{{N_v},{N_h},{N_f} \to \infty } \int\limits_{(\theta ,\phi ,\tau ) \in {\Omega _i}} {\frac{{{{\left| {{{\bf{v}}_o}^H{\bf{v}}(\theta ,\phi ,\tau )} \right|}^2}}}{{{N_v}{N_h}{N_f}}}{p_i}} d\theta d\phi d\tau \nonumber \\
& \overset{{a}}{=} 0, \label{Eq:vo_nullSpace}
\end{align}
where $\overset{{a}}{=}$ is due to Lemma \ref{lemma:mutualOthogonal} and the fact that the term $| {{{\bf{v}}_o}^H{\bf{v}}(\theta ,\phi ,\tau )} |$ is a finite value. Eq. (\ref{Eq:vo_nullSpace}) shows that any $\vv_o$ with $(\theta_o, \phi_o, \tau_o) \not\in \Omega _i$ is in the null space of $\mathcal{R}_i$ when $N_v, N_h$, and $N_f$ are large. Since $\Omega_p \cap \Omega_q = \emptyset$, we may readily see that the signal space of $\mathcal{R}_p$ and $\mathcal{R}_q$ are asymptotically orthogonal to each other. More precisely, define the signal space of $\mathcal{R}_i$ as:
\begin{align}
    \operatorname{span}\{ \mathcal{R}_i\} \defi \operatorname{span}\{ \vu_n^{(i)}: n = 1, \cdots, r_i\}, i = p, q,
\end{align}
where $\vu_n^{(i)}$ is the $n$-th eigenvector corresponding to the $n$-th non-zero eigenvalue of $\mathcal{R}_i$. $r_i$ is the rank of $\mathcal{R}_i$. Then we have
\begin{equation}
\operatorname{span}\{ \mathcal{R}_p\} \perp \operatorname{span}\{ \mathcal{R}_q\}, \text{ as } {N_v},{N_h},{N_f} \to \infty.
\end{equation}
Next we will prove that $\operatorname{span}\{ \mathcal{R}_p\}$ and $\operatorname{span}\{ \mathcal{R}_q\}$ converge to certain mutually orthogonal DFT column spaces.
We define
\begin{align}
{\tilde{\bf{ U}}_i} = \left[ {\begin{array}{*{20}{c}}
{{\bf{u}}_1^{(i)}}&{{\bf{u}}_2^{(i)}}& \cdots &{{\bf{u}}_{{r_i}}^{(i)}}
\end{array}} \right], i = p,q.
\end{align}
Then, $\operatorname{span}\{ \mathcal{R}_i\}$ is also the column space of ${\tilde{\bf{ U}}_i}$.
We show that the signal spaces of the following three covariance matrices converge to certain column spaces of DFT submatrices.
\begin{align}
{{\cal R}_{v,i}} &= \int_{{\theta _i} - {\Delta _\theta }}^{{\theta _i} + {\Delta _\theta }} {\frac{1}{{2{\Delta _\theta }}}} {{\bf{a}}_v}(\theta ){{\bf{a}}_v}{(\theta )^H}d\theta \\
{{\cal R}_{h,i}} &= \int_{{\phi _i} - {\Delta _\phi }}^{{\phi _i} + {\Delta _\phi }} {\frac{1}{{2{\Delta _\phi }}}} {{\bf{a}}_h}(\theta ,\phi ){{\bf{a}}_h}{(\theta ,\phi )^H}d\phi \\
{{\cal R}_{f,i}} &= \int_{{\tau _i} - {\Delta _\tau }}^{{\tau _i} + {\Delta _\tau }} {\frac{1}{{2{\Delta _{{\tau }}}}}} {\bf{b}}(\tau ){\bf{b}}{(\tau )^H}d\tau
\end{align}
We look at ${\cal R}_{v,i}$. Without loss of generality, we assume $\cos(\theta_i + \Delta_\theta) < \cos(\theta_i - \Delta_\theta) < 0$.
Denote the set of indices for which the corresponding ``angular frequency" in the DFT matrix $\mW(N_v)$ belong to the range $[-D_v \cos(\theta_i - \Delta_\theta)/\lambda_0, -D_v \cos(\theta_i + \Delta_\theta)/\lambda_0]$
\begin{equation}
\mathcal{J}_{v,i} \defi \{n: \frac{n}{N_v} \in [-\frac{D_v \cos(\theta_i - \Delta_\theta)}{\lambda_0}, -\frac{D_v \cos(\theta_i + \Delta_\theta)}{\lambda_0}]\}
\end{equation}
Denote the DFT submatrix $\tilde{\mF}_{v,i}$ as the matrix containing the columns of $\mW(N_v)$ with indices in $\mathcal{J}_{v,i}$.

According to Corollary 1 of \cite{adhikary:13},
\begin{align}\label{Eq:UU_FF}
\mathop {\lim }\limits_{{N_v} \to \infty } \frac{1}{{{N_v}}}\left\| {{{\widetilde {\bf{U}}}_{v,i}}\widetilde {\bf{U}}_{v,i}^H - {{\widetilde {\bf{F}}}_{v,i}}\widetilde {\bf{F}}_{v,i}^H} \right\|_F^2 = 0,
\end{align}
where ${{\widetilde {\bf{U}}}_{v,i}}$ is composed of the eigenvectors corresponding to the non-zero eigenvalues of ${{\cal R}_{v,i}}$. The rank of ${{\cal R}_{v,i}}$ is $r_{v,i}$, which satisfy \cite{yin:13} \cite{adhikary:13}:
\begin{equation}\label{Eq:normRank}
\mathop {\lim }\limits_{{N_v} \to \infty } \frac{r_{v,i}}{N_v} = \frac{|\mathcal{J}_{v,i}|}{N_v},
\end{equation}
where $|\mathcal{J}_{v,i}|$ is the cardinality of $\mathcal{J}_{v,i}$. In other words
\begin{equation}
|\mathcal{J}_{v,i}| = r_{v,i} + o(N_v).
\end{equation}
From Eq. (\ref{Eq:UU_FF}) and Eq. (\ref{Eq:normRank}) we readily obtain:
\begin{align}
\mathop {\lim }\limits_{{N_v} \to \infty } \frac{1}{N_v} \left\{ r_{v,i} - \operatorname{tr}\{ {{\widetilde {\bf{U}}}_{v,i}}\widetilde {\bf{U}}_{v,i}^H  {{\widetilde {\bf{F}}}_{v,i}}\widetilde {\bf{F}}_{v,i}^H \} \right\} = 0,
\end{align}
or equivalently,
\begin{align}
\operatorname{tr}\{ {{\widetilde {\bf{U}}}_{v,i}}\widetilde {\bf{U}}_{v,i}^H  {{\widetilde {\bf{F}}}_{v,i}}\widetilde {\bf{F}}_{v,i}^H \} = r_{v,i} + o(N_v).
\end{align}
In a similar manner, we define the ranks, non-negligible eigenvectors, and the corresponding DFT submatrices of ${{\cal R}_{h,i}}$ and ${{\cal R}_{f,i}}$ as $r_{h,i}$, $r_{f,i}$, ${{\widetilde {\bf{U}}}_{h,i}}$, ${{\widetilde {\bf{U}}}_{f,i}}$, ${\widetilde {\bf{F}}}_{h,i}$, and ${\widetilde {\bf{F}}}_{f,i}$ respectively. The sets of DFT columns corresponding to ${\widetilde {\bf{F}}}_{h,i}$ and ${\widetilde {\bf{F}}}_{f,i}$ are denoted by $\mathcal{J}_{h,i}$ and $\mathcal{J}_{f,i}$
We can prove
\begin{align}
\operatorname{tr}\{ {{\widetilde {\bf{U}}}_{h,i}}\widetilde {\bf{U}}_{h,i}^H  {{\widetilde {\bf{F}}}_{h,i}}\widetilde {\bf{F}}_{h,i}^H \} & = r_{h,i} + o(N_h) \\
\operatorname{tr}\{ {{\widetilde {\bf{U}}}_{f,i}}\widetilde {\bf{U}}_{f,i}^H  {{\widetilde {\bf{F}}}_{f,i}}\widetilde {\bf{F}}_{f,i}^H \} & = r_{f,i} + o(N_f) \\
|\mathcal{J}_{h,i}| = r_{h,i} + o(N_h) \\
|\mathcal{J}_{f,i}| = r_{f,i} + o(N_f).
\end{align}
Now we examine the closeness of the column space of ${\widetilde {\bf{F}}}_f \otimes {\widetilde {\bf{F}}}_h \otimes {\widetilde {\bf{F}}}_v$ to the column space ${\widetilde {\bf{U}}}_f \otimes {\widetilde {\bf{U}}}_h \otimes {\widetilde {\bf{U}}}_v$. The difference between the two spaces is defined as
\begin{align}
\xi_i \defi & \left\| \left( {{{\widetilde {\bf{U}}}_{f,i}}\widetilde {\bf{U}}_{f,i}^H} \right) \otimes \left( {{{\widetilde {\bf{U}}}_{h,i}}\widetilde {\bf{U}}_{h,i}^H} \right) \otimes \left( {{{\widetilde {\bf{U}}}_{v,i}}\widetilde {\bf{U}}_{v,i}^H} \right) \right. \nonumber\\
& \left. - \left( {{{\widetilde {\bf{F}}}_{f,i}}\widetilde {\bf{F}}_{f,i}^H} \right) \otimes \left( {{{\widetilde {\bf{F}}}_{h,i}}\widetilde {\bf{F}}_{h,i}^H} \right) \otimes \left( {{{\widetilde {\bf{F}}}_{v,i}}\widetilde {\bf{F}}_{v,i}^H} \right) \right\|_F^2. \label{Eq:eigenApprErr}
\end{align}
For notational simplicity, we temporarily drop the subscript $i$. Then we may derive
\begin{small}
\begin{align*}
\xi_i &= {\mathop{\rm tr}\nolimits} \left\{ {{{\widetilde {\bf{U}}}_f}\widetilde {\bf{U}}_f^H{{\widetilde {\bf{U}}}_f}\widetilde {\bf{U}}_f^H} \right\}{\mathop{\rm tr}\nolimits} \left\{ {{{\widetilde {\bf{U}}}_h}\widetilde {\bf{U}}_h^H{{\widetilde {\bf{U}}}_h}\widetilde {\bf{U}}_h^H} \right\}{\mathop{\rm tr}\nolimits} \left\{ {{{\widetilde {\bf{U}}}_v}\widetilde {\bf{U}}_v^H{{\widetilde {\bf{U}}}_v}\widetilde {\bf{U}}_v^H} \right\}\\
& \quad + {\mathop{\rm tr}\nolimits} \left\{ {{{\widetilde {\bf{F}}}_f}\widetilde {\bf{F}}_f^H{{\widetilde {\bf{F}}}_f}\widetilde {\bf{F}}_f^H} \right\}{\mathop{\rm tr}\nolimits} \left\{ {{{\widetilde {\bf{F}}}_h}\widetilde {\bf{F}}_h^H{{\widetilde {\bf{F}}}_h}\widetilde {\bf{F}}_h^H} \right\}{\mathop{\rm tr}\nolimits} \left\{ {{{\widetilde {\bf{F}}}_v}\widetilde {\bf{F}}_v^H{{\widetilde {\bf{F}}}_v}\widetilde {\bf{F}}_v^H} \right\} \\
& \quad - {\mathop{\rm tr}\nolimits} \left\{ {{{\widetilde {\bf{U}}}_f}\widetilde {\bf{U}}_f^H{{\widetilde {\bf{F}}}_f}\widetilde {\bf{F}}_f^H} \right\}{\mathop{\rm tr}\nolimits} \left\{ {{{\widetilde {\bf{U}}}_h}\widetilde {\bf{U}}_h^H{{\widetilde {\bf{F}}}_h}\widetilde {\bf{F}}_h^H} \right\}{\mathop{\rm tr}\nolimits} \left\{ {{{\widetilde {\bf{U}}}_v}\widetilde {\bf{U}}_v^H{{\widetilde {\bf{F}}}_v}\widetilde {\bf{F}}_v^H} \right\} \\
& \quad - {\mathop{\rm tr}\nolimits} \left\{ {{{\widetilde {\bf{F}}}_f}\widetilde {\bf{F}}_f^H{{\widetilde {\bf{U}}}_f}\widetilde {\bf{U}}_f^H} \right\}{\mathop{\rm tr}\nolimits} \left\{ {{{\widetilde {\bf{F}}}_h}\widetilde {\bf{F}}_h^H{{\widetilde {\bf{U}}}_h}\widetilde {\bf{U}}_h^H} \right\}{\mathop{\rm tr}\nolimits} \left\{ {{{\widetilde {\bf{F}}}_v}\widetilde {\bf{F}}_v^H{{\widetilde {\bf{U}}}_v}\widetilde {\bf{U}}_v^H} \right\} \\
& = {r_f}{r_h}{r_v} + \left( {{r_f} + o({N_f})} \right)\left( {{r_h} + o({N_h})} \right)\left( {{r_v} + o({N_v})} \right) \\
& \quad - 2\left( {{r_f} + o({N_f})} \right)\left( {{r_h} + o({N_h})} \right)\left( {{r_v} + o({N_v})} \right).
\end{align*}
\end{small}
Then, it is clear that
\begin{equation}\label{Eq:eigenApproxErrIs0}
\mathop {\lim }\limits_{{N_v},{N_h},{N_f} \to \infty } \frac{1}{{{N_v}{N_h}{N_f}}} \xi_i = 0.
\end{equation}
Eq. (\ref{Eq:eigenApproxErrIs0}) indicates that when ${N_v},{N_h},{N_f}$ are large, the column space of ${\bar {\bf{F}}}_{i}$ converges to ${\bar {\bf{U}}}_{i}$, where ${\bar {\bf{F}}}_{i} \defi {\widetilde {\bf{F}}}_{f, i} \otimes {\widetilde {\bf{F}}}_{h, i} \otimes {\widetilde {\bf{F}}}_{v, i}$ and ${\bar {\bf{U}}}_{i} \defi {\widetilde {\bf{U}}}_{f, i} \otimes {\widetilde {\bf{U}}}_{h, i} \otimes {\widetilde {\bf{U}}}_{v, i}$. Since $\operatorname{span}\{ \mathcal{R}_i\}$ is equivalent to the column space of ${\bar {\bf{U}}}_{i}$, according to the orthogonality between $\operatorname{span}\{ \mathcal{R}_p\}$ and $\operatorname{span}\{ \mathcal{R}_q\}$, the column spaces of ${\bar {\bf{F}}}_{p}$ and ${\bar {\bf{F}}}_{q}$ are also asymptotically orthogonal. In other words, define the column space of ${\bar {\bf{F}}}_{i}$:
\begin{equation}
{\bar {\cal B}_i} \defi \text{span} \{\vf_{i,n}: n = 1, \cdots, M_i \}, i = p, q,
\end{equation}
where $\vf_{i,n}$ is the $n$-th column of ${\bar {\bf{F}}}_{i}$ and $M_i$ is the number of columns of ${\bar {\bf{F}}}_{i}$. Then
\begin{equation}
{\bar {\cal B}_p} \bot {\bar {\cal B}_q} \text{ when } {N_v},{N_h},{N_f} \rightarrow \infty.
\end{equation}
As ${\bar {\bf{F}}}_{i}$ is a submatrix of the unitary matrix $\mS$ as in Eq. (\ref{Eq:Smat}), ${\bar {\bf{F}}}_{p}$ and ${\bar {\bf{F}}}_{q}$ have no shared columns of $\mS$ when ${N_v},{N_h},{N_f} \rightarrow \infty$. Furthermore, since $(\theta _{p}, {\phi _{p}}, \tau_p) \in \Omega_p$ and $(\theta _{q}, {\phi _{q}}, \tau_q) \in \Omega_q$, it follows that ${\cal B}_p   \subseteq {\bar {\cal B}_p}$ and ${\cal B}_q   \subseteq {\bar {\cal B}_q}$. Therefore we have
\begin{equation}
\mathcal{B}_p \bot \mathcal{B}_q \text{ as } {N_v, N_h, N_f \rightarrow \infty},
\end{equation}
which proves Lemma \ref{lemma:orthLinearSpace}.
\end{proof}

\subsection{Proof of Lemma \ref{lemma:pronyOrder1}:}\label{proof:pronyOrder1}
\begin{proof}
Since only two neighboring samples $y(m-1)$ and $y(m)$ are available, the order of the linear prediction is 1. We may obtain an estimate of the prediction coefficient $p_0$ according to Prony's method in Sec. \ref{sec:pronyReview} by solving the linear equation
\begin{equation}
y(m-1) p_0 = - y(m),
\end{equation}
where the solution is given by $\hat{p}_0 = - e^{j 2 \pi f}$. Now applying the linear prediction $\hat{y}(n+1) = - \hat{p}_0 y(n), \forall n \ge m$, we may obtain
\begin{align}
\hat{y}(m+N_d) & = (- \hat{p}_0)^{N_d} y(m) = e^{j 2 \pi f N_d } y(m) \\
& = \beta e^{j2\pi f m} e^{j 2 \pi f N_d } = {y}(m+N_d).
\end{align}
Thus, Lemma \ref{lemma:pronyOrder1} is proved.
\end{proof}

\subsection{Proof of Theorem \ref{theoNoiseFreeEstimationErr}:}\label{proof:theoNoiseFreeEstimationErr}
\begin{proof}
For a certain path $p$, define a submatrix $\tilde \mS_p \in \mathbb{C}^{N_t N_f \times S_p}$ of $\mS$:
\begin{equation}
\tilde \mS_p = \left[ {\begin{array}{*{20}{c}}
{{{\bf{s}}_{p,1}}}&{{{\bf{s}}_{p,2}}}& \cdots &{{{\bf{s}}_{p,{S_p}}}}
\end{array}} \right],
\end{equation}
where all columns of $\tilde \mS_p$ are chosen from $\mS$ with indices belonging to the set $\mathcal{M}_p$, which is defined in Eq. (\ref{Eq:M_p}). According to the definition of $\mathcal{M}_p$, the generalized steering vector $\vv_p$ is in the column space of $\tilde \mS_p$. Thus
\begin{equation}
 \lim_{N_v, N_h, N_f \rightarrow \infty} \frac{\left\|\tilde \mS_p^H  \vv_p \right\|^2}{N_v N_h N_f} = 1.
\end{equation}
We now consider $\vg_u(t)$ in Eq. (\ref{Eq:Gt}). Due to Lemma \ref{lemma:orthLinearSpace} and the condition Eq. (\ref{Eq:noiseFreeCond}), we may group the non-vanishing rows of $\vg_u(t)$ into $P$ set, with each set corresponding to a certain path. Notice that the sample error does not affect the selection of the non-vanishing rows here since according to Eq. (\ref{Eq:accurateSamples}) it converges to zero. For a certain $n \in \mathcal{M}_p$, we may derive
\begin{small}
\begin{align}
& \mathop {\lim }\limits_{{N_v},{N_h},{N_f} \to \infty } \frac{{{\bf{s}}_{{n}}^H{\tilde{\bm{\hbar}} _u}(t)}}{{\sqrt {{N_v}{N_h}{N_f}} }} = \mathop {\lim }\limits_{{N_v},{N_h},{N_f} \to \infty } \frac{{{\bf{s}}_{{n}}^H{\bm{\hbar} _u}(t)}}{{\sqrt {{N_v}{N_h}{N_f}} }} \\
& = \mathop {\lim }\limits_{{N_v},{N_h},{N_f} \to \infty } \frac{{{g_{u,n}}(t)}}{{\sqrt {{N_v}{N_h}{N_f}} }}  = \mathop {\lim }\limits_{{N_v},{N_h},{N_f} \to \infty } \frac{{\sum\limits_{i = 1}^P {{c_{u,i}}} (t){\rm{ }}{\bf{s}}_{{n}}^H{{\bf{v}}_p}}}{{\sqrt {{N_v}{N_h}{N_f}} }}  \nonumber \\
&= \mathop {\lim }\limits_{{N_v},{N_h},{N_f} \to \infty } \frac{{{\rm{ }}{c_{u,p}}(t){\bf{s}}_{{n}}^H{{\bf{v}}_p}}}{{\sqrt {{N_v}{N_h}{N_f}} }} = \mathop {\lim }\limits_{{N_v},{N_h},{N_f} \to \infty } {{\eta _{p,u,n}} e^{j{\omega _p}t}}, \label{Eq:ita_pun}
\end{align}
\end{small}
where ${g_{u,n}}(t)$ is the $n$-th row of $\vg_u(t)$ and
\begin{equation}
{\eta _{p,u,n}} = \frac{{{\bf{s}}_n^H{{\bf{v}}_p}{\beta _p}{e^{\frac{{j2\pi \hat r_{{\rm{rx}},p}^T{{\bar d}_{{\rm{rx}},u}}}}{{{\lambda _0}}}}}}}{{\sqrt {{N_v}{N_h}{N_f}} }}.
\end{equation}
We can see that ${\eta _{p,u,n}}$ is time invariant and is not affected by the vanishing sample error. Moreover, since $|{\eta _{p,u,n}}| \leq {\beta _p}$, Eq. (\ref{Eq:ita_pun}) converges to an exponential signal with only one pole frequency, which can be predicted without error using Prony's method even with only two neighboring samples. The same conclusion holds for the rows in all $P$ sets, while the other rows converge to zero when normalized by ${\sqrt {{N_v}{N_h}{N_f}}}$. Therefore,
\begin{align}
\mathop {\lim }\limits_{{N_v},{N_h},{N_f} \to \infty } \frac{\hat{\vg}_u{(t_{L+N_d})} - \vg_u{(t_{L+N_d})}}{{\sqrt {{N_v}{N_h}{N_f}} }} = 0,
\end{align}
where $\vg_u{(t_{L+N_d})} = \mS^H \bm{\hbar}_u(t_{L+N_d})$ and $\hat{\vg}_u{(t_{L+N_d})}$ is the prediction using the PAD algorithm.
Notice that
\begin{align}
\mathop {\lim }\limits_{{N_v},{N_h},{N_f} \to \infty } \frac{{\left\|{\bm{\hbar}}_u(t_{L+N_d})\right\|_2^2}}{{{N_v}{N_h}{N_f}}} = \sum\limits_{p = 1}^P {\beta _p^2},
\end{align}
when condition Eq. (\ref{Eq:noiseFreeCond}) is fulfilled. We may further derive
\begin{align}
& \lim_{N_v, N_h, N_f \rightarrow \infty} \frac{\left\|{\hat{\bm{\hbar}}_u(t_{L+N_d}) - \bm{\hbar}}_u(t_{L+N_d})\right\|_2^2}{\left\|{\bm{\hbar}}_u(t_{L+N_d})\right\|_2^2} \\
& = \mathop {\lim }\limits_{{N_v},{N_h},{N_f} \to \infty } \frac{ \left\|\hat{\vg}_u{(t_{L+N_d})} - \vg_u{(t_{L+N_d})}\right\|_2^2}{{N_v}{N_h}{N_f} \sum\limits_{p = 1}^P {\beta _p^2}} \\
& = 0,
\end{align}
which proves Theorem \ref{theoNoiseFreeEstimationErr}.
\end{proof}
%

\bibliographystyle{IEEEtran}
\bibliography{bib/allCitations}
\bibliographystyle{IEEEtran}
\begin{IEEEbiography}[{\includegraphics[width=1in,height=1.25in,clip,keepaspectratio]{./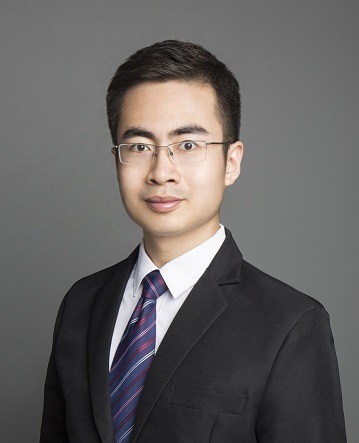}}]
{Haifan Yin} received the Ph.D. degree from T\'el\'ecom ParisTech in 2015. He received the B.Sc. degree in Electrical and Electronic Engineering and the M.Sc. degree in Electronics and Information Engineering from Huazhong University of Science and Technology, Wuhan, China, in 2009 and 2012 respectively. From 2009 to 2011, he has been with Wuhan National Laboratory for Optoelectronics, China, working on the implementation of TD-LTE systems as an R\&D engineer.
From 2016 to 2017, he has been a DSP engineer in Sequans Communications - an IoT chipmaker based in Paris, France. From 2017 to 2019, he has been a senior research engineer working on 5G standardization in Shanghai Huawei Technologies Co., Ltd., where he made substantial contributions to 5G standards, particularly the 5G codebooks. Since May 2019, he has joined the School of Electronic Information and Communications at Huazhong University of Science and Technology as a full professor.
His current research interests include 5G and 6G networks, signal processing, machine learning, and massive MIMO systems.

\end{IEEEbiography}
%
\begin{IEEEbiography}[{\includegraphics[width=1in,height=1.25in,clip,keepaspectratio]{./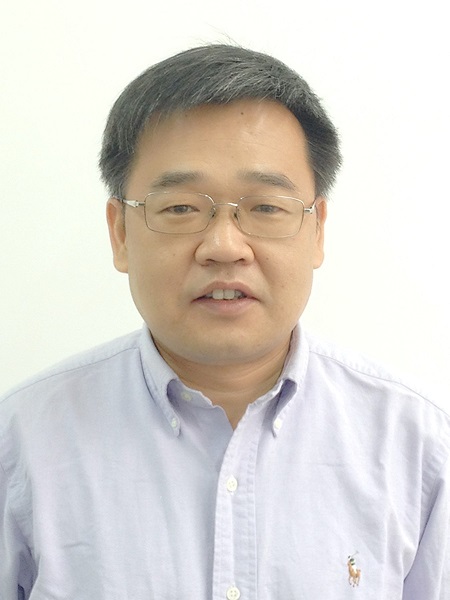}}]
{Haiquan Wang} (M'03-SM'19) received the M.S. degree in Nankai University, China, in 1989, and the Ph.D. degree in Kyoto University, Japan, in 1997, both in mathematics, and the Ph.D. degree in University of Delaware, Newark, in 2005, in electrical engineering. From 1997 to 1998, he was a Postdoctoral Researcher with the Department of Mathematics, Kyoto University. From 1998 to 2001, he was a Lecturer (part-time) with Ritsumei University, Japan. From 2001 to 2002, he was a Visiting Scholar with the Department of Electrical and Computer Engineering, University of Delaware, Newark. From 2005 to 2008, he was a Postdoctoral Researcher with the Department of Electrical and Computer Engineering, University of Waterloo, Canada. He joined the College of Communications Engineering, Hangzhou Dianzi University, Hangzhou, China, in July 2008 as a faculty member. His research interests include wireless communications, digital signal processing, and information theory.
\end{IEEEbiography}
%
\begin{IEEEbiography}[{\includegraphics[width=1in,height=1.25in,clip,keepaspectratio]{./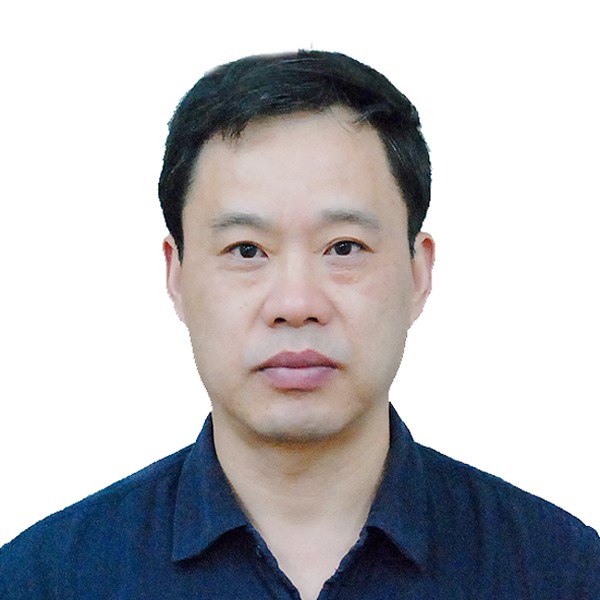}}]
{Yingzhuang Liu} is Professor of Huazhong University of Science and Technology (HUST). His main research field is broadband wireless communications, include 5G/6G, WLAN system, etc. From 2000 to 2001, he was a postdoctoral researcher in Paris University XI. Since 2003, he has presided more than 10 national key projects, published more than 100 papers and has more than 50 patents in the field of broadband wireless communication. He is now the group leader of broadband wireless research of HUST.

\end{IEEEbiography}

\begin{IEEEbiography}[{\includegraphics[width=1in,height=1.25in,clip,keepaspectratio]{./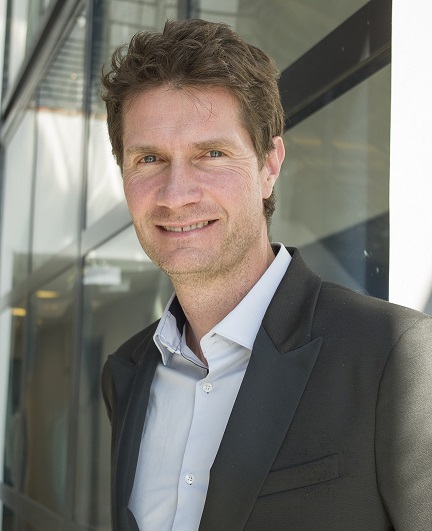}}]
{David Gesbert} (IEEE Fellow) is Professor and Head of the Communication Systems Department, EURECOM. He obtained the Ph.D. degree from Ecole Nationale Superieure des Telecommunications, France, in 1997. From 1997 to 1999 he has been with the Information Systems Laboratory, Stanford University. He was then a founding engineer of Iospan Wireless Inc, a Stanford spin off pioneering MIMO-OFDM (now Intel). Before joining EURECOM in 2004, he has been with the Department of Informatics, University of Oslo as an adjunct professor. D. Gesbert has published about 300 papers and 25 patents, some of them winning 2019 ICC Best Paper Award, 2015 IEEE Best Tutorial Paper Award (Communications Society), 2012 SPS Signal Processing Magazine Best Paper Award, 2004 IEEE Best Tutorial Paper Award (Communications Society), 2005 Young Author Best Paper Award for Signal Proc. Society journals, and paper awards at conferences 2011 IEEE SPAWC, 2004 ACM MSWiM. He has been a Technical Program Co-chair for ICC2017. He was named a Thomson-Reuters Highly Cited Researchers in Computer Science. Since 2015, he holds the ERC Advanced grant "PERFUME" on the topic of smart device Communications in future wireless networks. He is a Board member for the OpenAirInterface (OAI) Software Alliance. Since early 2019, he heads the Huawei-funded Chair on Advanced Wireless Systems Towards 6G Networks. He sits on the Advisory Board of HUAWEI European Research Institute.

\end{IEEEbiography}



\end{document}